\newtheorem{theorem}{Theorem}[section]
\newtheorem{lemma}[theorem]{Lemma}
\newtheorem{proposition}[theorem]{Proposition}
\newtheorem{corollary}[theorem]{Corollary}
\theoremstyle{definition}
\newtheorem{definition}[theorem]{Definition}
\newtheorem{remark}[theorem]{Remark}
\numberwithin{equation}{section}
\newcommand\cN{{\mathcal N}}
\newcommand\scM{{\mathscr M}}
\newcommand\scN{{\mathscr N}}
\newcommand\scO{{\mathscr O}}
\newcommand\scS{{\mathscr S}}
\newcommand\scW{{\mathscr W}}
\newcommand\mvector{\boldsymbol}
\newcommand\va{\mvector{a}}
\newcommand\vb{\mvector{b}}
\newcommand\vd{\mvector{d}}
\newcommand\vm{\mvector{m}}
\newcommand\vp{\mvector{p}}
\newcommand\vr{\mvector{r}}
\newcommand\vs{\mvector{s}}
\newcommand\vu{\mvector{u}}
\newcommand\vvv{\mvector{v}}
\newcommand\vx{\mvector{x}}
\newcommand\vz{\mvector{z}}
\newcommand\vA{\mvector{A}}
\newcommand\vB{\mvector{B}}
\newcommand\vC{\mvector{C}}
\newcommand\vD{\mvector{D}}
\newcommand\vG{\mvector{G}}
\newcommand\vH{\mvector{H}}
\newcommand\vI{\mvector{I}}
\newcommand\vJ{\mvector{J}}
\newcommand\vM{\mvector{M}}
\newcommand\vP{\mvector{P}}
\newcommand\vR{\mvector{R}}
\newcommand\vT{\mvector{T}}
\newcommand\vV{\mvector{V}}
\newcommand\vX{\mvector{X}}
\newcommand\vY{\mvector{Y}}
\newcommand\vzero{\mvector{0}}
\newcommand\vveps{\mvector{\varepsilon}}
\newcommand\field{\mathbb}
\newcommand\R{\field{R}}
\newcommand\CC{\field{C}}
\newcommand\N{\field{N}}
\newcommand\diag{\operatorname{diag}}
\newcommand\rmd{\mathrm{d}}
\newcommand\rmi{\mathrm{i}\mspace{1mu}}
\newcommand\rme{\mathrm{e}}
\newcommand\Dt{\frac{\mathrm{d}\phantom{t} }{\mathrm{d}\mspace{1mu}
t}}
\newcommand\pder[2]{\dfrac{\partial #1 }{\partial #2}}
\newcommand\abs[1]{\lvert #1 \rvert}
\newcommand\mtext[1]{\quad\text{#1}\quad}
\newcommand\defset[2]{\left\{{#1}\;\vert \;\; {#2} \,\right\}}
\DeclareRobustCommand*{\sump}{%
    \mathop{{\sum}^{\mathrlap{\prime}}}%
}
\title{Non-integrability of the $n$-body problem}
\author{
Andrzej J. Maciejewski\\
Janusz Gil Institute of Astronomy, University of Zielona G\'ora, \\
ul. Licealna~9, 65-417,  Zielona G\'ora, Poland \\
e-mail: a.maciejewski@ia.uz.zgora.pl \\ and \\
Maria Przybylska \\
Institute of Physics, University of Zielona G\'ora, \\
 Licealna 9, PL-65--417,  Zielona G\'ora, Poland
\\ e-mail: M.Przybylska@if.uz.zgora.pl \\ and \\
Thierry Combot\\
Universit\'e de Burgogne\\
9 avenue A. Savary BP 47870,21078 Dijon Cedex, France\\
e-mail: thierry.combot@u-bourgogne.fr
}
\begin{document}
\mathtoolsset{
%showonlyrefs,
mathic = true
}

\maketitle
\begin{abstract}
We prove that the classical planar $n$-body problem  when restricted to a common
level of the energy and the angular momentum is not integrable except in the
case when both values of these integrals are zero. In the proof of our theorem,
we use methods of differential Galois theory.
\end{abstract}

%------
\section{Introduction}
\label{sec:centr}
The classical $n$-body problem is a mechanical system of $n$ points with
positive masses $m_1, \ldots, m_n$ interacting gravitationally according to Newton's law. We consider the planar version of this problem. In the chosen
inertial frame, the positions and velocities of the points are given by the vectors $\vr_i\in\R^2$ and $\vvv_i=\dot \vr_i\in\R^2$, $1\leq i \leq n$. Then
the potential energy of the system is
\begin{equation}
\label{eq:V}
V(\vr)= -\sum_{1\leq i<j\leq n}\frac{m_i m_j}{\abs{\vr_i - \vr_j}},
\end{equation}
where $\vr=(\vr_1, \ldots, \vr_n)\in \left(\R^2\right)^n$.  We chose units in
such  that the gravitational constant is $G=1$.

The Hamiltonian function of this system has the form
\begin{equation}
\label{eq:ham}
H = \frac{1}{2} \vp^T \vM^{-1}\vp +V(\vr),
\end{equation}
where $\vp=(\vp_1, \ldots, \vp_n)\in \left(\R^2\right)^n$; 
$\vp_i=m_i\dot\vr_i\in\R^2$ for $1\leq i\leq n$, are the momenta, and $\vM$ is
$(2n)\times(2n)$ matrix of the form
\begin{equation}
\vM=
\begin{bmatrix}
m_1 \vI_2 & \vzero & \ldots & \vzero \\
\vzero & m_2 \vI_2 & \ldots & \vzero \\
\hdotsfor{4}\\
\vzero &\vzero & \ldots  &m_n \vI_2
\end{bmatrix}.
\end{equation}
Here, $\vI_2$, and $\vzero$ denote $2\times2$ the identity and zero matrices,
respectively.

The canonical equations of motion
\begin{equation}
\label{eq:eqsH}
\dot\vr = \vM^{-1 }\vp,  \qquad \dot\vp = -\nabla V(\vr),
\end{equation}
admit four first integrals: the Hamiltonian~\eqref{eq:ham},  the linear and the
angular momenta
\begin{equation}
\vp_{\mathrm{s}}=\sum_{i=1}^n \vp_i, \qquad C= \sum_{i=1}^n \vr_i^T\vJ_2^T\vp_i,
\qquad
\vJ_2 =
\begin{bmatrix}
0 & -1 \\
1& 0
\end{bmatrix}.
\end{equation}
These first integrals do not  commute
\begin{equation}
\{C,p_{\mathrm{s},1}\}=p_{\mathrm{s},2}\mtext{and} 
\{C, p_{\mathrm{s},2}\} = -p_{\mathrm{s},1}.
\end{equation}
The system has $2n$ degrees of freedom and for its integrability in the
Liouville sense $2n$ functionally independent and pairwise commuting first
integrals are needed.

We consider the problem in the  centre of mass reference frame, so we assume
that 
\begin{equation}
\label{eq:cm}
\vr_\mathrm{s} = \frac{1}{m}\sum_{i=1}^n m_i\vr_i=\vzero, \qquad \vp_{\mathrm{s}}=\vzero, 
\qquad m = \sum_{i=1}^{n} m_i.
\end{equation}
With these restrictions, the system remains Hamiltonian and has $2(n-1)$
degrees of freedom. 

The study of the integrability of the $n$-body problem has a long history. 
H.~Bruns~\cite{Bruns:1887::}  showed the non-existence of additional algebraic
first integrals, except energy, linear, and angular momentum.  His proof was
improved and generalized in \cite{Julliard-Tosel:00::}.

The first proof of the non-integrability of the fully reduced classical planar
three-body problem was done by A.~Tsygvintsev in
\cite{Tsygvintsev:00::a,Tsygvintsev:01::a}, and later by an application of the
differential Galois approach in \cite{Boucher:03::}, see also
\cite{Morales:09::a,Simon:07::}.  In \cite{mp:10::b} we have found a
surprisingly simple proof of the non-integrability of the classical three-body
problem, and later in \cite{mp:11::a} we generalized this result to the case of
three bodies that attract mutually with the force proportional to a certain
negative integer power of the distance between the bodies.

The first proof of the non-integrability of the planar $n$-body problem for
arbitrary $n>2$ and arbitrary positive masses was done by T.~Combot in~\cite{Combot497649}.

All the above-mentioned proofs are based either on applications of the Ziglin
theory \cite{Ziglin:82::b,Ziglin:83::b}, or Morales--Ramis theory
\cite{Morales:99::c}. In both theories variational equations along a particular
solution are investigated, and, in particular,  necessary conditions for the
integrability are formulated in terms of the monodromy or the differential
Galois groups of these equations. Both theories assume that the system under consideration is complex analytic or meromorphic. If these theories are
applied to the study of the integrability of a real system, in fact, the complexified system is considered.   

As we will use differential Galois methods, we will consider the complexification of the $n$-body problem. Therefore, we assume that $\vr, \vp\in
\left( \CC^2 \right)^n$. A vector $\vx=(x_1, \ldots, x_m)\in \CC^m$ is considered
as one column matrix. We will use the following notation 
\begin{equation}
\label{eq:scalar}
\va\cdot\vb=\va^T\vb=\sum_{i=1}^{m}a_ib_i, \mtext{for}\va, \vb\in \CC^m,
\end{equation}
and $\abs{\vx}=\sqrt{\vx\cdot\vx}$. Here, we underline that~\eqref{eq:scalar} does not define a Hermitian scalar product, so $\abs{\vx}$ is, in general, a complex number.

Moreover, as we already mentioned, the Ziglin and Morales--Ramis theories
require that the considered system, first integral and vector fields, be
meromorphic.  However, the gravitational potential~\eqref{eq:V} is not
meromorphic, thus neither the Hamiltonian \eqref{eq:ham}, nor the Hamiltonian
vector field $\vX_H$ generated by $H$ is meromorphic.   The technique to deal
with this problem was given by T.~Combot in \cite{Combot047475}, see also
\cite{mp:16::a}. We use it also in this paper.

To this end, we take the Cartesian product $\scN = \left( \CC^2 \right)^n \times
\left( \CC^2 \right)^n \times \left( \CC^\ast \right)^{n(n-1)/2}$  with
coordinates $(\vr, \vp, \vd)$ where $\vd$ is a vector of dimension $n(n-1)/2$
with coordinates $d_{i,j}$, for $1\leq i<j\leq n$. The problem of $n$-body can
be considered in the complex symplectic manifold $\widetilde{\scM}\subset \cN$
as defined by
\begin{equation}
\widetilde{\scM}=\defset{(\vr,\vp,\vd)\in \scN}{d_{i,j}^2=|\vr_i-\vr_j|^2,
	\quad 1\leq i<j\leq n }. 
\end{equation}
Now, the Hamiltonian $H$ considered as a function on $\widetilde{\scM}$ is a
rational function of $(\vr,\vp,\vd)$. We will investigate the system in the
centre of mass reference frame, so the phase space of the problem is restricted
to the following symplectic manifold 
\begin{equation}
\scM = \widetilde{\scM} \cap 
\defset{(\vr,\vp,\vd)\in\scN}{\vr_{\mathrm{s}}=\vzero, \quad \vp_{\mathrm{s}}=\vzero}.
\end{equation} 
The system restricted to $\scM$ remains Hamiltonian and admits the first two
integrals $H$ and $C$, which are rational on $\scM$, and commute. Therefore, we can study the integrability of the planar $n$-body problem when restricted to a common level of these first integrals, specifically to the manifold.
\begin{equation}
\label{eq:Mhc}
\scM_{h,c}=\defset{(\vr,\vp,\vd)\in \scM}{h = H(\vr,\vp,\vd), \quad c=C(\vr,\vp)}.
\end{equation}
The precise meaning of integrability on a common level of the first integrals is
given in the next section. 

In this article, our aim is to study the rational  integrability of the $n$-body problem
in $\scM_{h,c}$. Our main result is described in the following theorem. 
\begin{theorem}
	\label{thm:we}
	If $(h,c)\neq (0,0)$, then the planar $n$-body problem, with $n>2$, and positive
	masses $m_1, \ldots, m_n$,  restricted to the level $\scM_{h,c}$, is not
	integrable.
\end{theorem}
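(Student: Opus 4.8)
The plan is to apply the Morales--Ramis criterion on the algebraic symplectic manifold $\scM$, where, thanks to the auxiliary coordinates $\vd$, the Hamiltonian vector field is rational, so the hypotheses of the theory are met. First I would fix a particular solution lying on the prescribed level $\scM_{h,c}$. The natural candidates are the \emph{homographic solutions} generated by a central configuration $\vr_0$, that is, a configuration with $\nabla V(\vr_0)=\lambda\,\vM\vr_0$; such configurations exist for every $n$ and every choice of positive masses (for instance the collinear Moulton configurations). Writing the plane as $\CC$, the homographic motion is $\vr(t)=\zeta(t)\,\vr_0$, where the complex scalar $\zeta$ solves a planar Kepler problem $\ddot\zeta=-\kappa\,\zeta/\abs{\zeta}^{3}$. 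Its energy and angular momentum are fixed multiples of the Keplerian ones, so by rescaling $\vr_0$ and choosing the Kepler conic I can realise any target $(h,c)$; crucially, the conic degenerates to the radial, zero-energy free-fall orbit exactly when $(h,c)=(0,0)$, which singles out the excluded case.

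Next I would write the variational equation $\ddot{\vxi}=-\vM^{-1}D^2V(\vr(t))\,\vxi$ along this solution and decompose it. Because $V$ is homogeneous of degree $-1$, Euler's identity gives $D^2V(\vr_0)\vr_0=-2\nabla V(\vr_0)$, so $\vr_0$ and, by rotational invariance, $\vJ\vr_0$ span an invariant block on which the variational equation reduces to the variational equation of the Kepler problem; this block is integrable and carries no obstruction. The interesting dynamics live on the $\vM$-orthogonal complement of $\vspan\{\vr_0,\vJ\vr_0\}$ inside the centre-of-mass subspace, where $\vM^{-1}D^2V(\vr_0)$ is diagonalisable with eigenvalues $\sigma_1,\dots,\sigma_{2n-4}$, the stability exponents of the central configuration. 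Since the Hessian scales like $\abs{\zeta}^{-3}$, each eigendirection yields a scalar second-order normal variational equation whose time dependence enters only through $\zeta(t)$.

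I would then reparametrise each of these equations by the true anomaly of the Keplerian conic. A standard computation turns them into Gauss hypergeometric (Riemann $P$) equations with three regular singular points, corresponding to the pericentre, the apocentre, and infinity; the local exponents are explicit functions of the eigenvalue $\sigma_k$ and of the orbit, that is, of $(h,c)$. In the degenerate case $(h,c)=(0,0)$ two of these singular points coalesce and the equation becomes of Euler type, with abelian Galois group, consistent with the absence of any obstruction there. For $(h,c)\neq(0,0)$ the three singular points are genuinely distinct, and I would compute the differential Galois group of the resulting hypergeometric equation, via the Kimura--Schwarz classification, and show that for at least one index $k$ its identity component is non-abelian. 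By the Morales--Ramis theorem this contradicts the rational integrability of the system restricted to $\scM_{h,c}$.

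The main obstacle is this last step: guaranteeing, \emph{uniformly in the masses}, that some transverse eigenvalue $\sigma_k$ produces a non-abelian Galois group. The exponents of the hypergeometric equation are abelian only for a discrete list of values of $\sigma_k$, the analogue of the Morales--Ramis--Sim\'o tables for the Kepler case, so the difficulty is to exclude the degenerate possibility that \emph{every} stability exponent of the central configuration lands simultaneously in this exceptional list. Here I would exploit structural constraints on the spectrum of $D^2V(\vr_0)$ --- the virial/trace relation coming from the central-configuration equation and the signature of the Hessian restricted to the inertia sphere --- to force at least one $\sigma_k$ out of the exceptional set, and I would use the hypothesis $(h,c)\neq(0,0)$ to ensure that the associated exponent differences genuinely depend on the now-nondegenerate conic. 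This is where the non-triviality of $(h,c)$ is converted into the non-abelianity of the Galois group, and it is the step that requires the most care for general $n$ and arbitrary positive masses.
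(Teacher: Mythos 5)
Your overall architecture (homographic solution from a central configuration, Kepler block split off via Euler's identity and rotational invariance, normal variational equations indexed by the Hessian spectrum) matches the paper's, but there are three genuine gaps, two of which would make the argument fail as written. First, you invoke the Morales--Ramis theorem and aim to show a \emph{non-abelian} identity component, but that criterion addresses Liouville integrability on all of $\scM$, not integrability \emph{restricted} to the level $\scM_{h,c}$, which is what the theorem asserts. The paper has to build this bridge explicitly (Definition~\ref{def}, Proposition~\ref{propo:solvab}, and Corollary~\ref{hardcore} via Casale's theorem): restricted integrability yields a solvable--not abelian--Lie algebra of tangent vector fields, so the necessary condition is that the identity component of the Galois group be \emph{solvable}. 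Consequently you must rule out the solvable non-abelian possibilities (e.g.\ a Borel subgroup of $\mathrm{SL}(2,\CC)$), i.e.\ prove the group is all of $\mathrm{SL}(2,\CC)$; exhibiting non-abelianity alone does not contradict restricted integrability. Second, your claim that each transverse eigendirection yields a scalar equation reducible to a Gauss hypergeometric equation is incorrect when $c\neq 0$: in the rotating--pulsating frame the Coriolis term $2\widehat\vJ\vX'$ couples the two coordinates of each eigen-plane, so each eigenvalue produces a genuinely four-dimensional system, as in \eqref{eq:vsplit}. The paper needs a nontrivial Tschauner-type gauge transformation to split this into $2\times 2$ blocks, and the resulting scalar equations \eqref{eq:sec_zz} have \emph{four} regular singular points ($z=0$, $z=1\pm e$, $z=z_3$, plus infinity), not three, so the Kimura--Schwarz classification is unavailable; the paper instead combines a logarithmic singularity at $z=0$ with the first case of Kovacic's algorithm (Lemma~\ref{lem:log}) to force the group to be $\mathrm{SL}(2,\CC)$.

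Third, your parametrization claim -- that the Kepler conic realises every $(h,c)\neq(0,0)$ and only $(0,0)$ is degenerate -- misses the case $2hc^2=-\mu^2$: there the real homographic solution is circular, $f(\nu)$ is constant, the variational equations become autonomous, and \emph{no} obstruction arises from the real orbit. The paper must pass to a complex solution $f(\nu)=1+\rme^{\rmi\nu}$ of the energy relation \eqref{eq:df0} and redo the entire Galois analysis in that regime (Lemma~\ref{lem:main0}); your proposal has no mechanism for this case. Finally, the step you correctly flag as the main obstacle -- uniformity in the masses -- is resolved in the paper not by virial or trace identities but by Conley's theorem (Theorem~3.1 of Pacella's paper): for the Moulton \emph{collinear} configuration every transverse eigenvalue satisfies $\lambda_k>1$, i.e.\ $\delta_k>0$, and the appendices then show non-solvability for \emph{all} $\delta>0$ at once, so no exceptional discrete list ever needs to be excluded. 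Your plan of forcing merely \emph{some} eigenvalue out of an exceptional set is both vaguer and weaker than what is needed, since without the collinear choice the spectral information available for arbitrary masses is very limited.
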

The meaning of the integrability on a common level of a known first integral is defined in the next section.

In the proof of this theorem, we use methods of differential Galois theory.
If a complex Hamiltonian system is integrable in the Liouville sense, then
according to the Morales--Ramis theorem \cite{Morales:99::c}, the identity
component of the differential Galois group of variational equations along a
particular solution is Abelian. However, we investigate integrability on a
common level of known first integrals, so the necessary conditions for
integrability are weaker. We show that for this integrability, the necessary condition is
that the identity component of the differential Galois group of the variational
equations along a particular solution is solvable; see the next section.

For the proof of our Theorem~\ref{thm:we}, we need a particular solution. Therefore,
we have to find a particular solution to the $n$-body problem, for arbitrary $n>2$,
for arbitrary masses, and arbitrary values of energy and angular momentum.
For the general $n$-body problem, we have, in fact, only one possibility. Namely,
homothetic or homographic solutions associated with the Euler--Moulton collinear
central configurations. However, we do not know these configurations explicitly;
we only know that they exist. Nevertheless, they have nice properties, which are
crucial for our proof.  

First, the variational equations for the homographic solution associated
with these configurations are split into a direct product of four-dimensional
subsystems. This makes investigations of their differential Galois group
tractable. We show that each of the four-dimensional subsystems splits into
two-dimensional subsystems, and thanks to this fact, we could apply well-known
algorithms. The subsystems depend on the values of energy and angular momentum.
They also depend on the chosen central configuration through the
eigenvalues of the Hessian matrix of the potential evaluated at the central
configuration. These eigenvalues are functions of the masses and coordinates of
the configuration, but we do not know their explicit form. In this last step of
our proof, we have to know the signs of these eigenvalues, but we know thanks to the nice theorem of C. Conley, see
Theorem~3.1 in \cite{Pacella:87::}. 

In the proof, three cases can be distinguished.
\begin{enumerate}
	\item If $c=0$ and $h\neq 0$, then we have at our disposal only homothetic
	solutions. The proof for this case coincides with that contained in
	\cite{Combot497649}, and we do not present it here.
	\item If $c\neq 0$ and $2hc^2\neq-\mu^2$, then in our proof we use the
	homographic solution in which all the bodies move in Keplerian orbits with
	non-zero eccentricity $e^2= 1+ 2h c^2/\mu^2$.
	\item If $2hc^2=-\mu^2$, then there exists a homographic solution in which all bodies move in a circular orbit. However, using it we do not obtain any
	obstacles to the integrability because the variational equation reduces to an
	equation with constant coefficients.  In this case, for proof, we take a
	homographic solution in which all the bodies move in the complex part of the
	phase space.
\end{enumerate} 
The statement of our theorem for the case of equal masses    $m_1=\cdots= m_n$  
was proved in \cite{Combot:12::}.

\section{Notion of integrability on a level of known first integrals}
\label{sec:levelint}
For a Hamiltonian system with $n$ degrees of freedom, the integrability means
the integrability in the Arnold-Liouville sense: the system is integrable
iff admits $n$ independent and Poisson commuting first integrals. When there are
not enough first integrals, as in the $n$-body problem, it can be interesting to
study the Hamiltonian system restricted to certain levels of the known first
integrals. The restricted system is then no longer Hamiltonian; however, several
properties persist,  allowing us to define a similar notion of
integrability.

\begin{definition}
	\label{def}
	Consider a Hamiltonian system on a symplectic algebraic manifold $\scM$ of
	dimension $2n$, defined by a rational Hamiltonian $H$, admitting $I_1=H,\dots,I_m$
	functionally independent rational Poisson commuting first integrals. 
	Let 
	\begin{equation*}
	\scS=\defset{\vx\in\scM}{I_1(\vx)=0,\dots, I_m(\vx)=0},
	\end{equation*} 
	be their common level and assume that $\scS$ is an irreducible algebraic variety of
	dimension $2n-m$.  We say that $H$ is integrable restricted to $\scS$ if there
	exist $n-m$ rational functions $I_{m+1},\dots I_n$ on $\scM$ such that
	\begin{itemize}
		\item Integrals  $I_1,\dots I_n$ Poisson commute on $\scS$, that is, $\forall i,j,
		\{I_i,I_j\}=0$ on $\scS$.
		\item Functions $I_{1},\dots I_n$ are independent on $\scS$, i.e., the Jacobian
		matrix
		\begin{equation}\label{eqjac}
		\mathrm{Jac}(I_1, \ldots, I_n)=\begin{bmatrix}
		\partial_{x_1} I_{1} & \dots & \partial_{x_{2n}} I_{1} \\
		\dots & &  \\
		\partial_{x_1} I_n & \dots & \partial_{2n} I_n 
		\end{bmatrix}
		\end{equation}
		has  maximal rank $n$ on an open and dense subset of $\scS$.
	\end{itemize}
\end{definition}
The integrability defined above will also be referred to as restricted integrability. 

Note that the choice of zero levels of integrals $I_1,\dots,I_m$ is not
restrictive. Other levels could be reached by adding arbitrary constants to
$I_i$, and this does not affect the additional conditions for the restricted
integrability. Notice that the second requirement  gives conditions not only
for the restricted integrals $I_{m+1},\dots,I_n$ but also on $I_1,\dots,I_m$.
In fact, their gradients must be linearly independent. This requires that the ideal defined by the numerators of $I_1,\dots,I_m$ is radical. If
not, by the Ziglin lemma \cite{Ziglin:82::b}, we find an algebraic transformation of the first integrals $I_1,\dots,I_m$ 
which makes their gradient independent on $\scS$. The additional first integrals
$I_{m+1},\dots, I_n$ are presented as defined in $\scM$, although we are only
interested in their properties in $\scS$.

\begin{proposition}
	The conditions on $I_{m+1},\dots, I_n$ imposed in Definition~\ref{def} depend
	only on their values on $\scS$.
\end{proposition}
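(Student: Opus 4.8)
The plan is to show that replacing the extension $I_{m+1},\dots,I_n$ by any other rational functions $\tilde I_{m+1},\dots,\tilde I_n$ with the same restriction to $\scS$ leaves both requirements of Definition~\ref{def} unchanged, the defining integrals $I_1,\dots,I_m$ being held fixed. First I would record the algebraic consequence of the hypotheses. Since $\scS$ is irreducible and, by the remark preceding the proposition, the ideal generated by $I_1,\dots,I_m$ is radical, the vanishing ideal $\cI(\scS)$ equals $(I_1,\dots,I_m)$ on the dense open subset of $\scM$ where the $I_l$ are regular. Hence each difference $\delta_k\eqdef I_k-\tilde I_k$ (for $k>m$), which vanishes on $\scS$, can be written as $\delta_k=\sum_{l=1}^m a_{kl}I_l$ with $a_{kl}$ rational. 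All identities below are read on the dense open subset of $\scS$ where every function involved is regular and $\scS$ is smooth; this is harmless, since both conditions are required only on such a set.

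The technical core is the observation that \emph{for $l\le m$ and any $f$ with $f\big|_{\scS}=0$ one has $\{I_l,f\}\big|_{\scS}=0$.} Indeed, writing $f=\sum_{l'} b_{l'}I_{l'}\in\cI(\scS)$ and using the Leibniz rule,
\begin{equation*}
\{I_l,f\}=\sum_{l'}\Big(b_{l'}\{I_l,I_{l'}\}+\{I_l,b_{l'}\}I_{l'}\Big),
\end{equation*}
the second summand vanishes on $\scS$ because $I_{l'}=0$ there, and the first vanishes because $\{I_l,I_{l'}\}\equiv 0$ for $l,l'\le m$, as $I_1,\dots,I_m$ Poisson commute on $\scM$. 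Geometrically this says that the Hamiltonian vector fields $\vX_{I_1},\dots,\vX_{I_m}$ are tangent to the coisotropic variety $\scS$ and so annihilate any function that is constant on $\scS$.

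For the commutation requirement I would split the brackets into \emph{mixed} ones $\{I_l,I_j\}$ with $l\le m<j$ and \emph{upper} ones $\{I_i,I_j\}$ with $i,j>m$; those with both indices $\le m$ vanish identically. For a mixed bracket, $\{I_l,I_j\}-\{I_l,\tilde I_j\}=\{I_l,\delta_j\}$ vanishes on $\scS$ by the core observation, so mixed brackets depend only on the restrictions. For an upper bracket, bilinearity gives
\begin{equation*}
\{I_i,I_j\}\big|_{\scS}-\{\tilde I_i,\tilde I_j\}\big|_{\scS}
=\{\delta_i,\tilde I_j\}\big|_{\scS}+\{\tilde I_i,\delta_j\}\big|_{\scS}+\{\delta_i,\delta_j\}\big|_{\scS},
\end{equation*}
and expanding $\delta_i=\sum_l a_{il}I_l$, $\delta_j=\sum_l a_{jl}I_l$ and discarding the terms proportional to $I_l$ reduces the right-hand side to $\sum_l a_{il}\{I_l,\tilde I_j\}\big|_{\scS}+\sum_l a_{jl}\{\tilde I_i,I_l\}\big|_{\scS}$, a combination of mixed brackets, while $\{\delta_i,\delta_j\}\big|_{\scS}=0$ again by the core observation. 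Consequently, if all brackets of the system $I_1,\dots,I_n$ vanish on $\scS$, then its mixed brackets vanish, whence the tilde mixed brackets vanish, whence by the displayed identity the tilde upper brackets vanish; the converse is identical after exchanging the two systems. Thus the commutation condition is determined by the values $I_k\big|_{\scS}$.

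For the independence requirement I would argue pointwise on the smooth locus of $\scS$. There the radicality of $(I_1,\dots,I_m)$ forces $dI_1,\dots,dI_m$ to be linearly independent and to span the conormal space $N^*_{\vx}\scS\subset T^*_{\vx}\scM$, so that
\begin{equation*}
\rank\mathrm{Jac}(I_1,\dots,I_n)(\vx)
=m+\dim\big(\vspan(dI_{m+1},\dots,dI_n)\bmod N^*_{\vx}\scS\big).
\end{equation*}
Reduction modulo $N^*_{\vx}\scS$ is exactly restriction of covectors to $T_{\vx}\scS$, and $dI_k$ restricted to $T_{\vx}\scS$ is the differential $d\big(I_k\big|_{\scS}\big)$; hence the rank equals $m+\rank\big\{d(I_{m+1}\big|_{\scS}),\dots,d(I_n\big|_{\scS})\big\}$, which manifestly depends only on the restrictions, so the property ``rank $n$ on a dense open subset'' does as well. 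I expect the only genuine obstacle to be the step ``vanishes on $\scS\Rightarrow$ lies in $(I_1,\dots,I_m)$'', i.e.\ the identification $\cI(\scS)=(I_1,\dots,I_m)$: this is precisely where irreducibility, the Nullstellensatz, and above all the radicality of the ideal (guaranteed, if necessary after a Ziglin-type change of the $I_l$, by the preceding remark) are indispensable. The remaining care is pure bookkeeping with rational functions, restricting to the dense open set where denominators are invertible and $\scS$ is smooth, which affects neither requirement.
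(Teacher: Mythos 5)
Your proposal is correct, and its first half is essentially the paper's argument in a different dress: the paper writes any rational function vanishing on $\scS$ as $\Omega_1 I_1+\dots+\Omega_m I_m$, expands the Poisson bracket of the perturbed integrals, and observes that on $\scS$ every term either carries a factor $I_k$ or is a multiple of a bracket that vanishes on $\scS$ by hypothesis; you reach the same conclusion by isolating the tangency lemma $\{I_l,f\}\big|_{\scS}=0$ for $f\big|_{\scS}=0$, $l\leq m$, splitting into mixed and upper brackets, and making the equivalence explicitly two-sided (the paper states only one direction, but its computation is symmetric, so this is cosmetic). The genuine divergence is in the independence condition. The paper argues by row operations: on $\scS$ the gradients of the perturbed integrals differ from the original rows by linear combinations of the first $m$ rows, so $\mathrm{Jac}$ has the same rank --- an argument that works whether or not $dI_1,\dots,dI_m$ are independent. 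Your conormal-space identity $\rank \mathrm{Jac}(I_1,\dots,I_n)(\vx)=m+\rank\{d(I_{m+1}\big|_{\scS}),\dots,d(I_n\big|_{\scS})\}$ is more intrinsic and makes transparent that the rank is a function of the restrictions alone, but it presupposes that $dI_1,\dots,dI_m$ are independent and span $N^*_{\vx}\scS$, i.e.\ radicality of the ideal; you correctly flag this and cover it by the Ziglin-lemma normalization from the preceding remark, a hypothesis the paper's row-reduction avoids needing at this point. Both the bracket bookkeeping and the rank comparison are sound, so the proof goes through; the trade-off is that your version buys geometric clarity (tangency of $\vX_{I_1},\dots,\vX_{I_m}$ to $\scS$, restriction of covectors to $T_{\vx}\scS$) at the cost of leaning on radicality, while the paper's is more elementary and self-contained.
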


\begin{proof}
	We only have to check that the addition of $I_{m+1},\dots, I_n$ arbitrary functions that disappear in $\scS$ does not affect the conditions. Any rational function vanishing on $\scS$ can be written as
	\begin{equation*}
	\Omega_1 I_1+\dots+ \Omega_m I_m,
	\end{equation*}
	where $\Omega_i$ are rational functions on $\scM$ and smooth on an open set of $\scS$. Now, on $\scS$ we
	have 
	\begin{gather*}
	\Big\{I_i+\sum_{k=1}^{m}\Omega_{i,k} I_k,
	I_j+\sum_{l=1}^{m}\Omega_{j,l} I_l\Big\} = 
	\{I_i,I_j\}+  \sum_{l=1}^m \Omega_{j,l} \{I_i,I_l\} + 
	\sum_{k=1}^m \Omega_{i,k} \{I_k,I_j\} +\\ 
	\sum\limits_{1 \leq k,l \leq m} \Omega_{i,k}\Omega_{j,l} \{I_{k},I_{l}\}=0
	\end{gather*}
	because, by assumptions, all brackets $\{I_i,I_j\}$ vanish on $\scS$ as rational functions, and $\Omega_i$ are smooth on an open set of $\scS$, and thus, this quantity vanishes as a rational function.
	
	Finally, we show that  on $\scS$ Jacobian matrix 
	\begin{equation}
	\mathrm{J}= \mathrm{Jac}\left(I_1, \ldots, I_m, I_{m+1}+ \sum_{k=1}^{m}\Omega_{m+1,k} I_k,
	\ldots, I_n+\sum_{k=1}^{m}\Omega_{n,k} I_k\right),
	\end{equation}
	has the same rank as matrix $ \mathrm{Jac}(I_1, \ldots, I_n)$.  In fact, on
	$\scS$ we have
	\begin{equation}
	\mathrm{J} =
	\begin{bmatrix}
	\partial_{x_1} I_{1} & \dots & \partial_{x_{2n}} I_{1} \\
	\hdotsfor{3} \\
	\partial_{x_1} I_{m} & \dots & \partial_{x_{2n}} I_{m}  \\ 
	\partial_{x_1} I_{m+1} +\sum_{k=1}^m \Omega_{m+1,k}\partial_{x_1} I_{k}& 
	\dots & \partial_{x_{2n}} I_{m+1} + 
	\sum_{k=1}^m \Omega_{m+1,k}\partial_{x_{2n}} I_{k}\\ 
	\hdotsfor{3}\\
	\partial_{x_1} I_{n} +\sum_{k=1}^m \Omega_{n,k}\partial_{x_1} I_{k}& 
	\dots & \partial_{x_{2n}} I_{n} + 
	\sum_{k=1}^m \Omega_{n,k}\partial_{x_{2n}} I_{k}\\ 
	\end{bmatrix}.
	\end{equation}
	The sums in the rows with numbers $j=m+1, \ldots, n$ are just linear combinations of
	the first $m$ rows. This proves our claim.
\end{proof}

\begin{remark}
	\label{rem:prime}
	The condition $\scS$ to be  an irreducible algebraic variety is not always satisfied.
	For example, the Hamiltonian system
	\[
 H=\frac{1}{2}(p_1^2+p_2^2)-\frac{1}{2}(q_1^2+q_2^2)^2,
 \]
	admits the first integral $C=p_1q_2-p_2q_1$. Now, the common level $H=0,C=0$ is not   an irreducible algebraic variety, as the ideal that defines its factors in two components.
	\begin{equation*}
	\begin{split}
	\scS&=\langle p_1q_2-p_2q_1,p_2^2-q_1^2q_2^2-q_2^4,p_1p_2-q_1^3q_2-q_1q_2^3,p_1^2-q_1^4-q_1^2q_2^2\rangle \\ &\cup \langle q_1^2+q_2^2,p_1^2+p_2^2,p_1q_2-p_2q_1,p_1q_1+p_2q_2\rangle.
	\end{split}
	\end{equation*}
	We can, in principle, choose for $\scS$ the algebraic variety defined by one
	of these prime ideals instead of $H=0,C=0$, and then the previous
	proposition follows with little modification. However, note that $\scS$ is a
	common level of the first integrals and not simply common zeros of
	polynomials is an important fact. Indeed, this property allows one to build
	symplectically associated vector fields, which will then ensure
	integrability by quadratures.
\end{remark}
\begin{proposition}
	Consider a Hamiltonian $H$ that is integrable, restricted to a level of first integrals
	$\scS=\{I_1=0,\dots, I_m=0\}$, with additional first integrals
	$I_{m+1},\dots, I_n$.  Then there exist $n$ vector fields defined by $\vX_i(f)= \{I_i,f\}$ that satisfy on $\scS$ the following commutation rules
	\begin{equation*}
	\begin{split}
	&[\vX_i,\vX_j]=0, \quad \text{for}\ i,j\leq m,\quad\text{and}\\
	& [\vX_i,\vX_j]\in \operatorname{Span}_K(\vX_1,\ldots ,\vX_m),\ \quad\text{for}\ i \hbox{ or } j \geq m +1,
	\end{split}
	\end{equation*}
    where $K$ is the field of rational functions on $\scS$. Also, the Hamiltonian vector
	field $\vX_H$ restricted to $\scS$ is integrable by quadratures.
	\label{propo:solvab}
\end{proposition}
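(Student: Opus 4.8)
The plan is to reduce both commutation relations to the classical identity $[\vX_f,\vX_g]=\vX_{\{f,g\}}$, where $\vX_f=\{f,\cdot\}$; this is a restatement of the Jacobi identity for the Poisson bracket and gives $[\vX_i,\vX_j]=\vX_{\{I_i,I_j\}}$. For $i,j\leq m$ the integrals $I_1,\ldots,I_m$ are, by hypothesis, Poisson commuting as rational functions on all of $\scM$, so $\{I_i,I_j\}=0$ identically and $[\vX_i,\vX_j]=\vX_0=0$ everywhere, a fortiori on $\scS$; this is the first relation.

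For the second relation, with $i$ or $j$ exceeding $m$, I would use that $\{I_i,I_j\}$ vanishes on $\scS$. As in the proof of the preceding proposition, a rational function vanishing on $\scS$ can be written $\{I_i,I_j\}=\sum_{k=1}^m\Omega_k I_k$ with $\Omega_k$ rational on $\scM$ and regular on a dense open subset of $\scS$ — here one uses that, after the normalization of the $I_k$ discussed after Definition~\ref{def}, the ideal of $\scS$ is radical. Applying $\vX_{(\cdot)}$ together with the Leibniz rule $\vX_{\Omega_k I_k}=\Omega_k\vX_{I_k}+I_k\vX_{\Omega_k}$ and restricting to $\scS$, where $I_k=0$ for $k\leq m$, the terms $I_k\vX_{\Omega_k}$ drop out, leaving $[\vX_i,\vX_j]=\sum_{k=1}^m\Omega_k\vX_k$ on $\scS$, which lies in $\operatorname{Span}_K(\vX_1,\ldots,\vX_m)$. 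The point requiring care is precisely the regularity of the $\Omega_k$ along $\scS$, so that their restrictions are genuine elements of $K$; this is the step I expect to be the main obstacle.

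For the quadrature statement I would first note that every $\vX_i$ is tangent to $\scS$, since $\vX_i(I_j)=\{I_i,I_j\}=0$ on $\scS$ for $j\leq m$, and that $\vX_1,\ldots,\vX_n$ are pointwise independent on a dense open subset of $\scS$: this follows from the rank-$n$ Jacobian condition of Definition~\ref{def} together with the non-degeneracy of the symplectic form, which identifies $\vX_i$ with the image of $\rmd I_i$ under $\omega^{-1}$. By the two relations just proved, the distribution $\cD=\operatorname{Span}_K(\vX_1,\ldots,\vX_n)$ is involutive, all brackets lying in $\operatorname{Span}_K(\vX_1,\ldots,\vX_m)\subseteq\cD$, so by the Frobenius theorem $\scS$ is foliated by $n$-dimensional leaves, which are the components of the common levels of the known integrals $I_{m+1},\ldots,I_n$ inside $\scS$. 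Since $\vX_H=\vX_1$ belongs to $\cD$, its trajectories remain in a single leaf.

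It then remains to integrate $\vX_H$ within a leaf $L$ by quadratures. On $L$ the fields $\vX_1,\ldots,\vX_n$ form a frame, and the flag $\operatorname{Span}_K(\vX_1)\subset\operatorname{Span}_K(\vX_1,\vX_2)\subset\cdots\subset\operatorname{Span}_K(\vX_1,\ldots,\vX_n)$ consists of involutive distributions, because for $i<j$ one has $[\vX_i,\vX_j]\in\operatorname{Span}_K(\vX_1,\ldots,\vX_m)\subseteq\operatorname{Span}_K(\vX_1,\ldots,\vX_j)$. This is exactly a solvable structure adapted to $\vX_1$, whence the classical Lie theorem on integrability by quadratures, in its solvable-structure formulation, produces the flow of $\vX_H$ by a succession of quadratures. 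Once involutivity and the flag condition are in place this last appeal is routine; the genuine work is the construction and regularity of the decomposition $\{I_i,I_j\}=\sum_k\Omega_k I_k$ underlying the second relation.
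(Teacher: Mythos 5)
Your argument is correct, and its first half --- the identity $[\vX_i,\vX_j]=\vX_{\{I_j,I_i\}}$, the decomposition $\{I_i,I_j\}=\sum_{k}\Omega_k I_k$ for brackets vanishing on $\scS$, and the disappearance of the terms multiplied by $I_k$ upon restriction to $\scS$ --- is essentially the paper's own proof, including the point you flag as the main obstacle: the paper likewise only obtains the $\Omega$'s smooth on a Zariski open dense subset $\widetilde{\scS}$ avoiding the singularities of the rational data, and relies on the radicality normalization (via the Ziglin lemma) discussed after Definition~\ref{def}. Where you genuinely diverge is the quadrature step. The paper restricts, as you do, to a regular leaf $\scN$ of the foliation by common levels of $I_{m+1},\dots,I_n$, but then invokes the ``distributional integrability'' Theorem~9 of \cite{Carinena:2015::}: since $[\mathfrak{g},\mathfrak{g}]\subset\operatorname{Span}_K(\vX_1,\dots,\vX_m)$, the \emph{core} of the derived algebra is contained in $\langle\vX_1,\dots,\vX_m\rangle$, so $\langle\vX_H\rangle+[\mathfrak{g},\mathfrak{g}]_\ast$ is already Abelian, and that theorem integrates $\vX_H$ on $\scN$ in at most \emph{two} successive quadratures. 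You instead exhibit a classical solvable structure on the leaf: your flag $\operatorname{Span}_K(\vX_1)\subset\cdots\subset\operatorname{Span}_K(\vX_1,\dots,\vX_n)$ does satisfy the required condition $[\vX_i,\vX_j]\in\operatorname{Span}_K(\vX_1,\dots,\vX_{j-1})$ for $i<j$, because any possibly nonzero bracket lies in $\operatorname{Span}_K(\vX_1,\dots,\vX_m)$ and $j-1\geq m$ whenever $j\geq m+1$, so the Lie--Basarab-Horwath theorem applies. Both routes are sound: the paper's buys the sharper quantitative conclusion (two quadratures, consonant with its closing remark on solvability depth for towers of restricted integrals), while yours is more elementary and self-contained, needing only Frobenius and the textbook solvable-structure theorem at the cost of a length-$n$ chain; in both, the caveat that the fields form a frame only on a dense open subset (hence ``almost every leaf'') is handled identically by genericity.
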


Let us first recall the notion of ``distributional integrability''  introduced in article \cite{Carinena:2015::}.

\begin{definition}
Let $V$ be a vector space of vector fields tangential to an irreducible variety $\scN$. We note $\mathcal{D}_V$ the $K$-module of vector fields tangent to $\scN$ generated by $V$, where $K$ is the field of rational functions on $\scN$.\\
The vector space $V$ is called completely regular if for any $p\in\scN$, $V_p=T_p\scN$.\\
For a vector space $W$ of vector fields tangential to $\scN$, we call $W_\ast$ the core of $W$ in $V$, which is the smallest subspace of $V$ such that $W\subset \mathcal{D}_{W_\ast}$.
\end{definition}

Following the construction given in  \cite{Carinena:2015::}, let us  consider a vector field $\Gamma$ tangential to $\scN$, and build the following sequence of vector fields
\[
V_{\Gamma,0}=V,\quad V_{\Gamma,n}= \langle\Gamma\rangle+ [V_{\Gamma,n-1},V_{\Gamma,n-1}]_\ast.
\]

Now Theorem 9 in article \cite{Carinena:2015::} states that if $V_{\Gamma,n}$ becomes Abelian for a certain $n$, then the vector field $\Gamma$ can be integrated using at most $n+1$ successive quadratures.

\begin{proof}
	First integrals  $I_1,\dots, I_m$,  commute in  whole $\scM$, and thus define the commuting vector fields $\vX_i$ at any level of $I_1,\dots I_m$, and thus, in   particular, at $\scS$. As they are first integrals, they also commute with
	$\vX_H$. The first integrals $I_1,\dots,I_n,H$ are rational, and thus can have singularities on $\scS$, and so we will denote by $\widetilde{\scS}$ an open Zariski dense subset of $\scS$ avoiding them. Now, consider $j\geq m+1$. We have
	\begin{equation*}
	\{I_j,I_i\}=\Omega_{j,i,1} I_1+ \ldots +\Omega_{j,i,m} I_m,
	\end{equation*}
	for some smooth functions $\Omega_{j,i,k}$ on $\widetilde{\scS}$, because these Poisson brackets
	should vanish on $\scS$. 
	As 
	\begin{equation}
	[\vX_i,\vX_j]=\vX_{\{I_j,I_i\}},
	\end{equation} 
	we have 
	\begin{equation}
	[\vX_i,\vX_j](f)= \Big\{f,  \Omega_{j,i,1} I_1+ \cdots +
	\Omega_{j,i,m} I_m \Big\}.
	\end{equation}
	Hence, computing this commutator on $\scS$ we get 
	\begin{equation}
	\begin{split}
	[\vX_i,\vX_j](f)= &\Omega_{j,i,1}\{f,I_1\} +\cdots+\Omega_{j,i,m}\{f,I_m\}\\
	=& \Omega_{j,i,1} \vX_1(f)+\cdots+\Omega_{j,i,m}\vX_m(f).
	\end{split}
	\end{equation}
	We thus obtained a Lie algebra of vector field $\mathfrak{g}=\langle
	\vX_1,\dots,\vX_n\rangle$, all tangent to $\scS$, and its derivative is
	\begin{equation*}
	\mathcal{D}(\mathfrak{g}):=[\mathfrak{g},\mathfrak{g}] 
    \subset \hbox{Span}_K (\vX_1,\dots,\vX_m),
	\end{equation*}
	as for $i,j\leq m$  we have $[\vX_i,\vX_j]=\vzero$. Thus by construction, taking the core of this derivative gives
    \[
    [\mathfrak{g},\mathfrak{g}]_\ast \subset \langle\vX_1,\dots,\vX_m\rangle,
    \]
    and so
    \[
    \langle\vX_H\rangle+ [\mathfrak{g},\mathfrak{g}]_\ast \subset  \langle\vX_H,\vX_1,\dots,\vX_m\rangle.
    \]
    Now these vector fields $\vX_H,\vX_1,\dots,\vX_m$ all pairwise commute, and thus form an Abelian algebra. As the first integrals $I_1,\dots,I_n$ are functionally independent, the Lie algebra $\mathfrak{g}=\langle \vX_1,\dots,\vX_n\rangle$ spans, at each regular point, a $n$ dimensional vector space.\\
    Now considering a (regular) leaf $\scN$ of the algebraic foliation given by the common levels of $(I_{m+1},\dots,I_{n})$, we have that $\mathfrak{g}$ is a Lie algebra of vector fields, tangent to $\scN$. As $\dim \scN= 2n-m-(n-m)=n$, the Lie algebra $\mathfrak{g}$ is completely regular on $\scN$. Theorem 9 of \cite{Carinena:2015::} implies then that $\vX_H$, restricted to $\scN$, is solvable by quadratures using at most $2$ successive quadratures. As this holds for almost any leaf $\scN$, $\vX_H$, restricted to $\scS$, is solvable by quadratures.
\end{proof}

This procedure could be generalized by considering towers of restricted first
integrals, still leading to a solvable Lie algebra but with higher solvability
depth. Proposition~\ref{propo:solvab} can now be used to apply Casale's Theorem
3.6 in \cite{Casale:09::}, giving the following corollary.
\begin{corollary}
	\label{hardcore}
Consider a rational Hamiltonian $H$ on a symplectic manifold $\scM$ admitting functionally independent Poisson commuting rational first
	integrals $I_1,\dots,I_m$, and assume that the common level $\scS=\{I_1=0,\dots, I_m=0\}$ defines
	an irreducible algebraic variety of dimension $2n-m$. Let  $\Gamma \subset \scS$ be an algebraic orbit of the vector field $\vX_H$, associated with $H$,   $K$  be the differential field of rational functions on $\Gamma$ equipped with the derivation induced by the vector field $\vX_H$, and $G$ be the differential Galois group over $K$ of order $k$ variational equations along
	$\Gamma$.       If $H$ is integrable
	restricted to $\scS$, then the identity
	component of $G$   is solvable for all $k\in\mathbb{N}$.
\end{corollary}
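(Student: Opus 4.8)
The plan is to deduce the corollary by feeding the dynamical conclusion of Proposition~\ref{propo:solvab} into Casale's differential Galois criterion \cite[Theorem~3.6]{Casale:09::}. Almost all of the work is already done: under the hypothesis that $H$ is integrable restricted to $\scS$, Proposition~\ref{propo:solvab} produces on each regular leaf $\scN$ of the foliation $\{I_{m+1}=\mathrm{const},\dots,I_n=\mathrm{const}\}$ a Lie algebra $\mathfrak{g}=\langle\vX_1,\dots,\vX_n\rangle$ of rational vector fields, tangent to $\scN$ and completely regular there (since $\dim\scN=n$ equals the generic rank of $\mathfrak{g}$), whose derived series lands inside the Abelian algebra $\langle\vX_H,\vX_1,\dots,\vX_m\rangle$. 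By Theorem~9 of \cite{Carinena:2015::} this is exactly the statement that $\vX_H$ restricted to $\scN$ is integrable by quadratures, in at most two steps. What remains is to convert this quadrature-integrability of the \emph{flow on the leaf} into solvability of the identity component of the Galois group of the variational equations along the \emph{single orbit} $\Gamma$, for every order $k$.

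First I would observe that, since $I_1,\dots,I_n$ are first integrals, the orbit $\Gamma\subset\scS$ is contained in one regular leaf $\scN$; after shrinking to the Zariski-dense open subset $\widetilde{\scS}$ on which the $I_i$ are smooth, exactly as in the proof of Proposition~\ref{propo:solvab}, I may assume $\Gamma\subset\scN\cap\widetilde{\scS}$, so that $K$, the derivation induced by $\vX_H$, and the variational systems $\VE{k}$ are all well defined. The solvable structure of $\mathfrak{g}$ together with the integrability by quadratures of $\vX_H$ on $\scN$ is precisely the hypothesis of Casale's theorem: it forces the Malgrange (Galois) $D$-groupoid of $\vX_H$ along $\Gamma$ to be solvable. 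Casale's Theorem~3.6 then reads off, for each $k\in\N$, the differential Galois group $G$ of $\VE{k}$ over $K$ and concludes that its identity component is solvable. Thus, once the hypotheses are aligned, the corollary is an immediate application of \cite[Theorem~3.6]{Casale:09::}.

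The delicate step is this alignment of hypotheses. Proposition~\ref{propo:solvab} and \cite{Carinena:2015::} deliver a solvable \emph{finite-dimensional} Lie algebra of rational symmetries and the associated distributional integrability by quadratures, whereas Casale works with the solvability of the infinite-dimensional Malgrange pseudogroup. I would therefore need to verify that the tower $V_{\vX_H,n}$ becoming Abelian implies that the $D$-groupoid envelope of the flow of $\vX_H$ along $\Gamma$ is contained in a solvable algebraic $D$-groupoid, and that this solvability is stable both under restriction from $\scN$ to $\Gamma$ and under prolongation to arbitrary order $k$. The uniformity in $k$ is in fact built into the groupoid formulation, since solvability of the Malgrange groupoid descends to the Galois group of the $k$-th variational equation for every $k$ simultaneously; so the only genuinely new verification is that the symmetry-algebra solvability of Proposition~\ref{propo:solvab} maps to groupoid solvability in Casale's sense, which is the point one must extract carefully from \cite{Casale:09::}.
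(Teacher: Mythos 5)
Your proposal is correct and takes essentially the same route as the paper: the paper obtains the corollary exactly by combining Proposition~\ref{propo:solvab} (which, via Theorem~9 of \cite{Carinena:2015::}, yields the solvable symmetry algebra and integrability by quadratures of $\vX_H$ restricted to $\scS$) with a direct appeal to Casale's Theorem~3.6 in \cite{Casale:09::}, which delivers solvability of the identity component of the Galois group of the variational equations of every order $k$ along $\Gamma$. The ``delicate alignment'' between finite-dimensional symmetry-algebra solvability and solvability of the Malgrange pseudogroup that you flag is not verified in the paper either---it is delegated wholesale to the citation of \cite{Casale:09::}---so your formulation matches, and is if anything more careful than, the published argument.
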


We will apply this corollary to the $n$-body problem restricted to the common
level of the first integrals $H,C$,  denoted by $\scM_{h,c}$. For this, we need to
ensure that this common level is indeed a irreducible algebraic variety.

\begin{proposition}
	The algebraic variety $\scM_{h,c}$ is   an irreducible variety for $n\geq
	3$, all $h,c\in\mathbb{C}$ and any positive masses. 
\end{proposition}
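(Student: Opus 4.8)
The plan is to realise $\scM_{h,c}$ as a fibration over an irreducible base with irreducible generic fibre, and then to exclude spurious components by a dimension count. Write $\scR=\{\vr\in(\CC^2)^n:\sum_i m_i\vr_i=\vzero\}\cong\CC^{2(n-1)}$ and $\cP=\{\vp\in(\CC^2)^n:\sum_i\vp_i=\vzero\}\cong\CC^{2(n-1)}$. Since every defining relation of $\scM$ constrains either $(\vr,\vd)$ or $\vp$ alone, and none couples them, one has a product decomposition $\scM=\scB\times\cP$, where $\scB=\{(\vr,\vd):\vr\in\scR,\ d_{i,j}^2=\abs{\vr_i-\vr_j}^2,\ d_{i,j}\in\CC^\ast\}$; in particular $\dim\scM=4(n-1)$, and it suffices to understand $\scB$ and the momentum directions separately.

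\emph{Step 1 (irreducibility of $\scB$).} The projection $\scB\to\scR$ has image in the complement of $\Delta=\bigcup_{i<j}\{\abs{\vr_i-\vr_j}^2=0\}$, because a collision would force $d_{i,j}\notin\CC^\ast$; over $\scR\setminus\Delta$ it is a finite étale cover of degree $2^{n(n-1)/2}$, the fibre being the sign choices in $d_{i,j}=\pm\sqrt{\abs{\vr_i-\vr_j}^2}$. As $\scR\setminus\Delta$ is the complement of a hypersurface in an affine space, it is smooth and irreducible, so $\scB$ is irreducible exactly when the monodromy acts transitively on this fibre. A small loop encircling a smooth point of $\{\abs{\vr_i-\vr_j}^2=0\}$ that avoids every other component of $\Delta$ reverses the sign of $d_{i,j}$ and fixes all other $d_{k,l}$; these elementary flips generate the full group $\{\pm1\}^{n(n-1)/2}$, hence the monodromy is transitive and $\scB$ is irreducible of dimension $2(n-1)$. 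I expect this transitivity of the sign-monodromy to be the decisive structural point.

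\emph{Step 2 (the momentum fibre).} Consider $f:\scM_{h,c}\to\scB$. Over $(\vr,\vd)$ the fibre is the set of $\vp\in\cP$ satisfying the affine-linear equation $C(\vr,\vp)=c$ together with the quadratic equation $\tfrac12\vp^T\vM^{-1}\vp=h-V(\vd)$. The functional $\vp\mapsto C(\vr,\vp)$ restricted to $\cP$ vanishes identically only if all $\vr_i$ coincide, i.e.\ (given $\vr_\mathrm{s}=\vzero$) only if $\vr=\vzero$, which is excluded on $\scB$; so $\{C=c\}$ always cuts an affine hyperplane $\cP'\subset\cP$ of dimension $2n-3$. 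Since $Q(\vp)=\tfrac12\vp^T\vM^{-1}\vp$ is nondegenerate on $\cP$, its restriction to the direction space of $\cP'$ has rank at least $2n-4$, which is positive for $n\geq3$; thus the quadratic equation genuinely cuts $\cP'$ and every fibre of $f$ is non-empty of pure dimension $2n-4$. On a dense open subset of $\scB$ this restricted form is nondegenerate (a single discriminant condition, which I will verify is not identically satisfied, e.g.\ via $\sum_i m_i\abs{\vr_i}^2\not\equiv0$), and then the fibre is a nondegenerate affine quadric in a space of dimension $2n-3\geq3$, hence irreducible. This is precisely where the hypothesis $n\geq3$ is used.

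\emph{Step 3 (assembling).} Because $\scM$ is irreducible of dimension $4(n-1)$ and $\scM_{h,c}$ is its common zero locus of the two regular functions $H-h$ and $C-c$, Krull's principal ideal theorem gives $\dim Z\geq 4(n-1)-2=4n-6$ for every irreducible component $Z\subseteq\scM_{h,c}$. On the other hand, writing $Y'=\overline{f(Z)}$, the generic fibre of $Z\to Y'$ sits inside a fibre of $f$ and so has dimension at most $2n-4$, whence $\dim Z\leq\dim Y'+(2n-4)$; the lower bound then forces $\dim Y'\geq 2n-2=\dim\scB$, i.e.\ $Y'=\scB$. Thus every component of $\scM_{h,c}$ dominates $\scB$. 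Since the generic fibre of $f$ is irreducible, $\scM_{h,c}$ has a unique component dominating $\scB$; combined with the previous sentence this shows $\scM_{h,c}$ has a single component and is therefore irreducible. The main obstacle is exactly this last bookkeeping together with the uniform fibre-dimension statement of Step 2: without the uniformity, degenerations of the momentum quadric over the bad locus of $\scB$ could a priori produce extra top-dimensional components, and it is the constancy of the fibre dimension that rules them out.
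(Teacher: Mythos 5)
Your proof is correct, but it takes a genuinely different route from the paper's. The paper argues directly on the defining ideal: treating $(\vr,\vd)$ as parameters, it shows that a factorization of $\mathcal{I}$ in $\overline{\CC(\vr)}[\vp]$ would force the quadratic polynomial $H-h$, restricted to the linear conditions $\vp_{\mathrm{s}}=\vzero$ and $C=c$, to split into factors linear in $\vp$, so the kinetic form $\tfrac12\vp^T\vM^{-1}\vp$ would vanish on a complex subspace of dimension $2n-4$ inside a space of $2n-3$ variables; Lemma~\ref{lem:quad} (a positive definite form in $m$ variables vanishing on a subspace $\scW$ forces $\dim\scW\leq\lfloor m/2\rfloor$) then gives the contradiction $n\leq 2$, and a final step excludes factors in $(\vr,\vd)$ alone. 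You instead argue geometrically: the product structure $\scM=\scB\times\cP$, irreducibility of $\scB$ via transitivity of the sign monodromy of the covers $d_{i,j}^2=\abs{\vr_i-\vr_j}^2$ (note only that over $\CC$ each $\abs{\vr_i-\vr_j}^2$ splits into two linear forms, so each ``component'' of $\Delta$ is a pair of hyperplanes — harmless for your loop argument), irreducibility of the generic momentum fibre as a quadric whose homogenization has rank at least $3$, and Krull's Hauptidealsatz plus the uniform fibre dimension $2n-4$ to exclude spurious components; the hypothesis $n\geq 3$ enters for you through $2n-4\geq 2$ exactly where the paper uses $2n-4\leq\lfloor(2n-3)/2\rfloor$. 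Both proofs ultimately rest on positive definiteness of the kinetic energy — the paper via Lemma~\ref{lem:quad}, you via nondegeneracy of $Q$ on $\cP$; your promissory discriminant in Step 2 does close as announced, since the dual vector of $C(\vr,\cdot)$ with respect to the kinetic metric on $\cP$ is $\vp_i=m_i\vJ_2\vr_i$, so degeneracy of the restriction occurs precisely on $\{\sum_i m_i\abs{\vr_i}^2=0\}$, which misses the Zariski-dense real locus of $\scB$. Two comparative remarks: (i) your Step~1 makes explicit a point the paper's reduction to $\overline{\CC(\vr)}[\vp]$ leaves implicit, namely the irreducibility of the $(\vr,\vd)$-base itself, which is what legitimizes viewing $\vd$ as living in a single algebraic closure; (ii) conversely, the paper's argument proves more than the statement — primality of the ideal $\mathcal{I}$, which it invokes immediately afterwards to conclude generic independence of the gradients of $H$ and $C$ on $\scM_{h,c}$ — whereas your argument establishes irreducibility of the variety (its reduced structure) only, so that stronger conclusion would have to be recovered separately, e.g.\ via Proposition~\ref{pro:indep}.
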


\begin{proof}
	The variety $\scM_{h,c}$ is defined as $\widetilde{\scM}$ with six
	additional restrictions
	\begin{equation*}
	\vr_s=0,\quad \vp_{\mathrm{s}}(\vp)=0,\quad C(\vr,\vp)-c=0,\quad H(\vr,\vp,\vd)-h=0.
	\end{equation*}
	These conditions, together with the definition of mutual distances $\vd$, define an
	algebraic variety, which is the zero locus of an ideal $\mathcal{I}\subset
	\mathbb{R}[\vr,\vp,\vd]$. We now want to prove that this ideal is prime. We will first prove that this ideal is prime as an ideal of $\overline{\mathbb{C}(\vr)}[\vp]$, which is equivalent to considering
	$\vr,\vd$ as parameters.
	
	Now the restriction $\vr_s=0$ and the definition of mutual distances involve only
	the positions, thus the only relations involving $\vp$ are
	\begin{equation*}
	\vp_{\mathrm{s}}(\vp)=0,\quad C(\vr,\vp)-c=0,\quad H(\vr,\vp,\vd)-h=0.
	\end{equation*}
	The first two are linear
	in $\vp$, and the last is quadratic. If this ideal was not prime, then it would
	factor into two ideals defined by linear equations in $\vp$. Thus, under
	restrictions $\vp_{\mathrm{s}}(\vp)=0$ and $ C(\vr,\vp)-c=0$, the quadratic polynomial
\[
H(\vr,\vp,\vd)-h=L_1(\vr,\vp,\vd)L_2(\vr,\vp,\vd)
\]
would factor in
	$L_1,L_2$ linear in $\vp$. Considering the dominant term in $\vp$, this implies
	that the kinetic part of the Hamiltonian
\[
\frac{1}{2} \vp^T
	\vM^{-1}\vp=\widetilde{L}_1(\vr,\vp,\vd)\widetilde{L}_2(\vr,\vp,\vd)
 \]also factorizes
	into homogeneous linear factors in $\vp$, and thus
 \[\vp^T \vM^{-1}\vp=0,\quad  \forall \vp\in
	\mathcal{V}=\{\vp_{\mathrm{s}}(\vp)=0,C(\vr,\vp)=0,\widetilde{L}_1(\vr,\vp,\vd)=0\}.
 \]
 This kinetic
	part is a positive definite quadratic form, thus after restricting it to the
	real vector space $\vp_{\mathrm{s}}(\vp)=0,C(\vr,\vp)=0$, it is still a positive definite
	quadratic form in $2n-3$ variables. However, the unknown linear factor $\widetilde{L}_1$ could be complex, and thus a priori it is possible that this positive definite
	quadratic form vanishes on $\widetilde{L}_1=0$.

	\begin{lemma}
		\label{lem:quad}
        If a positive definite quadratic form in $m$ variables vanishes on a vector space $\scW$, then $\dim \scW\leq \lfloor \frac{m}{2} \rfloor$.
	\end{lemma}
	
	Here, $\lfloor x \rfloor$ denotes the floor function defined as the greatest integer less than or equal to $x$.
	Using this Lemma for factorization of $\vp^T \vM^{-1}\vp$, we would need 
	\begin{equation*}
	\dim \mathcal{V}=2n-4 \leq \left\lfloor \frac{2n-3}{2} \right\rfloor,
	\end{equation*}
	which is equivalent to $n\leq 2$. Thus, the ideal $\mathcal{I}$ is a prime ideal as
	an ideal of $\overline{\mathbb{C}(\vr)}[\vp]$. To conclude, there could still be
	a factor of $\mathcal{I}$ that contains an element in $\vr,\vd$ only, which would
	thus appear as a constant in $\overline{\mathbb{C}(\vr)}[\vp]$. As previously,
	this would be a factor $L(\vr,\vd)$ of $\vp^T \vM^{-1}\vp$, when restricted to
\[
 \widetilde{\mathcal{V}}= \{\vp_{\mathrm{s}}(\vp)=0,C(\vr,\vp)=0\}.
 \]
	When $L(\vr,\vd)=0$, the quadratic form $\vp^T \vM^{-1}\vp$ would then vanish on $\widetilde{\mathcal{V}}$.
	
	For any given $\vr\in\mathbb{C}^{2n}$, $\vp^T \vM^{-1}\vp$ restricted to
	$\vp_{\mathrm{s}}(\vp)=0$ is a real positive definite quadratic
	form of $2n-2$ variables, and $\widetilde{\mathcal{V}}$ has dimension $2n-3$. Using the Lemma, for
	$\vp^T \vM^{-1}\vp$ vanishing on $\widetilde{\mathcal{V}}$, we need
\[
2n-3\leq  \left\lfloor \frac{2n-2}{2} \right\rfloor, 
\]
	which is equivalent to $n\leq 2$. Thus, $\scM_{h,c}$ is  an irreducible algebraic variety.
\end{proof}

Note that we not only proved that $\scM_{h,c}$ is  an irreducible algebraic variety,
but that the ideal $\mathcal{I}$  is prime. This implies that $\scM_{h,c}$ is
not a multiple level of the first integrals $H$ and $C$, and thus their
gradients are generically linearly independent on $\scM_{h,c}$. Finally, we
recall the following fact. 
\begin{proposition}
	\label{pro:indep}
	Assume that the gradients of $H$ and $C$ are collinear. Then
	$\abs{\vr_i}=\mathrm{const}_i$, for $i=1,\ldots, n$, and $\vp_{\mathrm{s}}=\vzero$.
\end{proposition}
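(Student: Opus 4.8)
The plan is to translate the collinearity of the gradients into explicit linear relations between $\vp_i$ and $\vr_i$, and then to exploit two structural features of the $n$-body Hamiltonian: the translation invariance of the potential and the antisymmetry of $\vJ_2$. First I would compute the gradients of the two integrals in the phase coordinates $(\vr,\vp)$. For the Hamiltonian one has $\partial_{\vp_i} H=\vp_i/m_i$ and $\partial_{\vr_i} H=\partial_{\vr_i} V$, while for $C=\sum_i\vr_i^T\vJ_2^T\vp_i$ one finds $\partial_{\vp_i} C=\vJ_2\vr_i$ and $\partial_{\vr_i} C=\vJ_2^T\vp_i=-\vJ_2\vp_i$. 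Since the gradients are collinear and, for the gravitational potential, $\nabla H\neq\vzero$, we may write $\nabla H=\lambda\,\nabla C$ for a scalar $\lambda$, which componentwise reads
\begin{equation}
\label{eq:A}
\vp_i=\lambda\, m_i\,\vJ_2\vr_i,\qquad
\partial_{\vr_i} V=-\lambda\,\vJ_2\vp_i,\qquad i=1,\ldots,n.
\end{equation}
Equivalently, collinearity of the gradients is the same as collinearity of the Hamiltonian vector fields $\vX_H=\lambda\,\vX_C$, where $\vX_C$ is precisely the generator of the simultaneous rotation $\dot\vr_i=\vJ_2\vr_i$, $\dot\vp_i=\vJ_2\vp_i$.

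Next I would prove $\vp_{\mathrm{s}}=\vzero$. The potential $V$ depends only on the differences $\vr_i-\vr_j$, hence it is translation invariant and the total force vanishes, $\sum_{i=1}^n\partial_{\vr_i} V=\vzero$. Summing the second relation in~\eqref{eq:A} over $i$ gives $\vzero=-\lambda\,\vJ_2\sum_i\vp_i=-\lambda\,\vJ_2\vp_{\mathrm{s}}$; since $\vJ_2$ is invertible and $\lambda\neq0$ in the nondegenerate case, this forces $\vp_{\mathrm{s}}=\vzero$. To obtain $\abs{\vr_i}=\mathrm{const}_i$ I would use the first relation in~\eqref{eq:A} together with the antisymmetry of $\vJ_2$. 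Pairing it with $\vr_i$ yields $\vr_i\cdot\vp_i=\lambda\,m_i\,\vr_i^T\vJ_2\vr_i=0$, because $\vr_i^T\vJ_2\vr_i=0$ for an antisymmetric $\vJ_2$. Using Hamilton's equation $\dot\vr_i=\vp_i/m_i$ one then gets $\tfrac{1}{2}\tfrac{\mathrm{d}}{\mathrm{d}t}\abs{\vr_i}^2=\vr_i\cdot\dot\vr_i=\vr_i\cdot\vp_i/m_i=0$ along the flow, so each $\abs{\vr_i}$ is constant on the orbit. The cleaner way to see the same conclusion is that, since $\vX_H=\lambda\,\vX_C$, the $\vX_H$-orbit is a reparametrized rotation orbit, and rotations preserve each $\abs{\vr_i}^2=\vr_i\cdot\vr_i$.

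The main point requiring care is the precise meaning of \emph{collinear}: the argument is cleanest when the collinearity holds along the whole orbit (equivalently, when $H$ and $C$ are functionally dependent there), so that the radial identity $\vr_i\cdot\vp_i=0$ integrates to $\abs{\vr_i}=\mathrm{const}_i$; at an isolated point it only yields the instantaneous vanishing of $\tfrac{\mathrm{d}}{\mathrm{d}t}\abs{\vr_i}^2$. I would also dispose of the degenerate possibilities separately: $\nabla C=\vzero$ forces a total collision $\vr_i=\vp_i=\vzero$, so the conclusions hold trivially, while $\nabla H=\vzero$ cannot occur for the gravitational potential, as it would require $\nabla V=\vzero$. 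Finally, I would note that passing to the complexified problem is harmless here: although $\vr_i\cdot\vr_i$ is only a bilinear, not Hermitian, form, the identity $\vr_i^T\vJ_2\vr_i=0$ uses only the antisymmetry of $\vJ_2$ and therefore persists.
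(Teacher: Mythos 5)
Your proposal is correct, and for the first claim it is essentially the paper's argument in different clothing: the paper extracts $\vr_i^T\dot\vr_i=0$ from the vanishing of the $2\times 2$ minors $\det[\nabla_{\vp_i}H,\,\nabla_{\vp_i}C]=\det[\dot\vr_i,\,\vJ_2\vr_i]=\vr_i^T\dot\vr_i$, which is exactly what you get by pairing $\vp_i=\lambda m_i\vJ_2\vr_i$ with $\vr_i$ and using the antisymmetry of $\vJ_2$. For the second claim, however, your route genuinely diverges from the paper's, and to your advantage. The paper deduces $\tfrac{1}{2}\Dt\left(\vr_{\mathrm{s}}^T\vr_{\mathrm{s}}\right)=0$ from the first claim and concludes $\vp_{\mathrm{s}}=\vzero$; but its displayed computation keeps only the diagonal terms $m_i^2\,\vr_i^T\dot\vr_i$ of $\vr_{\mathrm{s}}^T\dot\vr_{\mathrm{s}}=\tfrac{1}{m^2}\sum_{i,j}m_im_j\,\vr_i^T\dot\vr_j$, and the passage from $\abs{\vr_{\mathrm{s}}}=\mathrm{const}$ to $\vp_{\mathrm{s}}=\vzero$ (which needs that $\vp_{\mathrm{s}}$ is a first integral, so $\vr_{\mathrm{s}}$ moves uniformly and a constant $\abs{\vr_{\mathrm{s}}}$ forces zero velocity) is left implicit. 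You instead use the position block of the gradient relation, $\partial_{\vr_i}V=\lambda\vJ_2^T\vp_i$, sum over $i$, and invoke translation invariance $\sum_{i=1}^n\partial_{\vr_i}V=\vzero$ to obtain $\lambda\vJ_2^T\vp_{\mathrm{s}}=\vzero$ directly; this is self-contained, independent of the first claim, and avoids the cross-term issue entirely, at the modest cost of ruling out the degenerate cases $\nabla C=\vzero$ and $\lambda=0$ (i.e.\ $\nabla H=\vzero$), which you do. Your caveat about the meaning of collinearity is also well taken: at a single point one only gets the instantaneous vanishing of $\Dt\abs{\vr_i}^2$, and the stated conclusion $\abs{\vr_i}=\mathrm{const}_i$ requires collinearity along the orbit; the paper's proof carries the same implicit assumption, and the pointwise version is in fact all that the application needs, since along the homographic solutions with $\dot\rho\neq 0$ one has $\vr_i^T\dot\vr_i=\rho\dot\rho\,\abs{\vs_i}^2\neq 0$ at every point.
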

\begin{proof}
	We consider the matrix $[\nabla H, \nabla C]$, and take the following minors \begin{equation}
	\det[ \nabla_{\vp_i} H,  \nabla_{\vp_i} C]=\det[\dot\vr_i, \vJ_2\vr_i]=\vr_i^T\dot\vr_i = \frac{1}{2}\Dt \left(\vr_i^T\vr_i\right)=0.
	\end{equation}
	By our assumption, all these minors vanish, so our first claim follows.
	Now, a direct calculation gives 
	\begin{equation*}
	\frac{1}{2}\Dt \left(\vr_{\mathrm{s}}^T\vr_{\mathrm{s}}\right) = \vr_{\mathrm{s}}^T\dot\vr_{\mathrm{s}} =
	\frac{1}{m^2}\sum_{i=1}^n m_i^2 \vr_i^T\dot\vr_i =0. 
	\end{equation*}
	Therefore, $\abs{\vr_{\mathrm{s}}}=\mathrm{const}$, and this proves the second
	claim.
\end{proof}
\section{Central configurations and homographic solutions}
\label{sec:colcen}
In this section, we recall some facts about central configurations. We
consider only real configurations, assuming in this section that $\vr_i\in\R^2$,
for $i=1,\ldots,n$. 

For further consideration, it is convenient to rewrite the equations of motion
\eqref{eq:eqsH} in the following form
\begin{equation}
\label{eq:newton}
\vM \ddot\vr = - \nabla V(\vr).
\end{equation}
In the $n$-body problem, the points with positions $\vs_i$  form a central
configuration if the total force acting on a point is directed toward the
center of mass. Thus, a vector $ \vs= (\vs_1, \ldots, \vs_n)\in \left(
\R^2\right)^n$ is a central configuration if
\begin{equation}
\label{eq:cc_exp}
\sump_{j=1}^n \frac{m_i m_j}{\abs{\vs_i - \vs_j}^3}\left( \vs_i -
\vs_j\right)
= \mu m_i \vs_i
, \qquad i = 1, \ldots,n,
\end{equation}
for a certain $\mu\in\R$. The primed sum sign denotes the summation over $j\neq
i$. We can rewrite these equations using the potential $ V(\vr)$ given in
\eqref{eq:V} in the form
\begin{equation}
\label{eq:cc}
\nabla V(\vs) = \mu \vM\vs.
\end{equation}
It is easy to show that
\begin{equation}
\label{eq:lam}
\mu := \mu(\vs)=-\frac{V(\vs)}{I(\vs)},  \qquad I(\vs)= \vs^T \vM \vs.
\end{equation}

Notice that if $\vs$ is a central configuration, then, for an arbitrary
$\vA\in\mathrm{SO}(2,\R)$, $\widetilde\vs:=\widehat{\vA}\vs:=(\vA\vs_1, \ldots,
\vA\vs_n)$ is also a central configuration with the same $\mu$. Moreover, if
$\vs$ is a central configuration, then, $\widetilde\vs=\alpha\vs$ is also a
central configuration for an arbitrary $\alpha\neq 0$, with $\mu(
\widetilde\vs)= \alpha^{-3} \mu(\vs)$. Using this property, we can normalize a
central configuration $\vs$ in such a way that $\mu(\vs)=1$. 

Let $\vH(\vr):= \nabla^2 V(\vr)$  be the Hessian matrix of the potential.  It
has the form
\begin{equation}
\label{eq:hess}
\vH(\vr) =
\begin{bmatrix}
\vH_{11}(\vr) & \ldots & \vH_{1n}(\vr)  \\
\hdotsfor{3} \\
\vH_{n1}(\vr) & \ldots & \vH_{nn}(\vr)  \\
\end{bmatrix},
\end{equation}
where $2\times2$ blocks $\vH_{ij}(\vr)$ for $i\neq j$ are given by
\begin{equation}
\label{eq:Hij}
\vH_{ij}(\vr)= -\frac{m_i m_j}{\abs{\vr_i - \vr_j}^3}
\left[ \vI_2 - 3 \frac{(\vr_i-\vr_j)(\vr_i-\vr_j)^T}{\abs{\vr_i - \vr_j}^2}
\right],
\end{equation}
and for $j=i$ are defined as
\begin{equation}
\label{eq:Hii}
\vH_{ii} (\vr)= -\sump_{j=1}^n \vH_{ij}(\vr).
\end{equation}
From the above formulae we can easily deduce that 
\begin{equation}
\label{eq:Hijprop}
\vH_{ij}(\widehat{\vA}\vr)=  \vA \vH_{ij}(\vr)\vA^T,
\end{equation}
and thus
\begin{equation}
\label{eq:H}
\vH(\widehat{\vA}\vr)=\widehat{\vA} \vH(\vr)\widehat{\vA}^T.
\end{equation} 

A central configuration gives a family of particular solutions of the problem
which we will call the Kepler homographic solution, see~\cite{Moeckel:15::}.
Below we will assume that 
\begin{equation}
     \vA(\nu) = 
	\begin{bmatrix}
	\cos \nu & -\sin \nu\\
	\sin \nu &  \cos \nu
	\end{bmatrix}.
\end{equation}
\begin{proposition}
	\label{pro:homographic}
	Let $ \vs= (\vs_1, \ldots, \vs_n)$ be a central configuration. Then
	equations~\eqref{eq:newton} admit solution
	\begin{equation}
	\label{eq:homgr}
	\vr(t) = \rho(t) \widehat\vA(\nu(t))\vs, 
	\end{equation}
	where $\rho(t)$ and $\nu(t)$ are a solution of planar Kepler problem
	\begin{equation}
	\label{eq:kep2}
	\left.
	\begin{aligned}
	& \ddot \rho - \rho \dot\nu^2 = -\frac{\mu}{\rho^2} ,\\
	& \rho\ddot\nu + 2 \dot\rho\dot\nu = 0.
	\end{aligned}
	\quad\right\}
	\end{equation}
\end{proposition}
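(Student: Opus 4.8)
The plan is to substitute the ansatz \eqref{eq:homgr} directly into the Newtonian equations \eqref{eq:newton} and watch the full $2n$-dimensional system collapse onto the two scalar Kepler equations \eqref{eq:kep2}. The whole computation rests on three structural properties of $\nabla V$. First, since $V$ is homogeneous of degree $-1$ in $\vr$, its gradient is homogeneous of degree $-2$, that is $\nabla V(\rho\vr)=\rho^{-2}\nabla V(\vr)$. Second, because $V$ depends only on the mutual distances $\abs{\vr_i-\vr_j}$, we have $V(\widehat{\vA}\vr)=V(\vr)$ for every $\vA\in\mathrm{SO}(2,\R)$, and differentiating this identity yields the equivariance $\nabla V(\widehat{\vA}\vr)=\widehat{\vA}\,\nabla V(\vr)$, the gradient analogue of \eqref{eq:H}. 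Third, the defining relation \eqref{eq:cc} of a central configuration is $\nabla V(\vs)=\mu\vM\vs$. Chaining these three facts I obtain the force term along the ansatz,
\[
\nabla V\bigl(\rho\,\widehat{\vA}(\nu)\vs\bigr)=\rho^{-2}\nabla V\bigl(\widehat{\vA}(\nu)\vs\bigr)=\rho^{-2}\widehat{\vA}(\nu)\nabla V(\vs)=\frac{\mu}{\rho^2}\,\widehat{\vA}(\nu)\vM\vs .
\]

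Next I would differentiate the ansatz in time. Writing $\widehat{\vJ}_2=\diag(\vJ_2,\dots,\vJ_2)$, the rotation obeys $\tfrac{\rmd}{\rmd\nu}\widehat{\vA}(\nu)=\widehat{\vJ}_2\widehat{\vA}(\nu)$ and $\widehat{\vJ}_2^{\,2}=-\vI$ (the $2n\times2n$ identity). Applying the chain rule twice to $\vr=\rho\,\widehat{\vA}(\nu)\vs$ and collecting the terms that are proportional to $\widehat{\vA}(\nu)\vs$ and to $\widehat{\vJ}_2\widehat{\vA}(\nu)\vs$ gives
\[
\ddot\vr=\bigl(\ddot\rho-\rho\dot\nu^2\bigr)\widehat{\vA}(\nu)\vs+\bigl(2\dot\rho\dot\nu+\rho\ddot\nu\bigr)\widehat{\vJ}_2\widehat{\vA}(\nu)\vs .
\]
Since $\vM=\diag(m_1\vI_2,\dots,m_n\vI_2)$ is block-scalar, it commutes with both $\widehat{\vA}(\nu)$ and $\widehat{\vJ}_2$, so left-multiplication by $\vM$ merely replaces $\vs$ by $\vM\vs$ in each term.

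Comparing $\vM\ddot\vr$ with $-\nabla V(\vr)$ from the first display, equation \eqref{eq:newton} becomes
\[
\Bigl(\ddot\rho-\rho\dot\nu^2+\tfrac{\mu}{\rho^2}\Bigr)\widehat{\vA}(\nu)\vM\vs+\bigl(2\dot\rho\dot\nu+\rho\ddot\nu\bigr)\widehat{\vJ}_2\widehat{\vA}(\nu)\vM\vs=\vzero .
\]
If $\rho(t),\nu(t)$ solve the Kepler system \eqref{eq:kep2}, then both scalar coefficients vanish identically and the ansatz is a genuine solution of \eqref{eq:newton}; this is precisely the assertion. (The converse implication, if wanted, follows by observing that $\widehat{\vA}(\nu)\vM\vs$ and $\widehat{\vJ}_2\widehat{\vA}(\nu)\vM\vs$ are linearly independent whenever $\vM\vs\neq\vzero$, because $\vJ_2$ is antisymmetric with $\vJ_2^2=-\vI_2$; but it is not needed for the stated direction.)

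I expect the only genuinely delicate point to be justifying the homogeneity and the rotational equivariance of $\nabla V$ in the complexified, distance-coordinate setting: on $\widetilde{\scM}$ the potential is rational in the coordinates $d_{i,j}$ with $d_{i,j}^2=\abs{\vr_i-\vr_j}^2$, so the scaling $\vr\mapsto\rho\vr$ must be tracked as $d_{i,j}\mapsto\rho\,d_{i,j}$ with a consistent choice of sign, and the rotation must be checked to leave each $d_{i,j}$ fixed. Once these two properties are secured, everything else is the routine matrix bookkeeping sketched above, and the decisive structural input is the central-configuration identity $\nabla V(\vs)=\mu\vM\vs$: it aligns the gravitational force with $\vM\vs$, the same direction as the radial part of the acceleration, which is exactly what lets the $n$-body system close on a single planar Kepler problem.
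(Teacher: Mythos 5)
Your proposal is correct and follows essentially the same route as the paper's proof: substitute the ansatz, compute $\ddot\vr$ via the rotation identities, reduce the force term to $\frac{\mu}{\rho^2}\widehat{\vA}(\nu)\vM\vs$ using homogeneity and the central-configuration relation \eqref{eq:cc}, and match coefficients of $\widehat{\vA}(\nu)\vM\vs$ and $\widehat{\vJ}_2\widehat{\vA}(\nu)\vM\vs$. The only cosmetic difference is that you invoke homogeneity and rotational equivariance of $\nabla V$ abstractly in block form, whereas the paper obtains the same force term by a direct component-wise manipulation of the sum in \eqref{dVrt} and concludes via the orthogonality of $\vs_i$ and $\vJ_2\vs_i$.
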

\begin{proof}
	We substitute  $\vr_i(t) = \rho(t) \vA(\nu(t))\vs_i$ into
	equation~\eqref{eq:newton}. As 
	\begin{equation}
	\label{dotrt}
	\dot\vr_i(t) = \vA(\nu(t)) \left[ \dot\rho \vI_2  + \dot\nu \rho \vJ_2\right]\vs_i,
	\end{equation}
	we get 
	\begin{equation}
	\ddot\vr_i(t) = \vA(\nu(t)) \left[  (\ddot\rho -{\dot\nu}^2\rho) \vI_2 +  ( 2 \dot\rho\dot\nu  +\rho \ddot\nu )\vJ_2\right]\vs_i . 
	\end{equation}
	The right-hand side of equation~\eqref{eq:newton}  after substitution is 
	\begin{multline}
	\label{dVrt}
	-\sump_{j=1}^n \frac{m_i m_j}{\abs{\vr_i - \vr_j}^3}
	\left( \vr_i - \vr_j\right) = -\frac{1}{\rho^2} 
	\vA(\nu(t))\sump_{j=1}^n \frac{m_i m_j}{\abs{\vs_i - \vs_j}^3}
	\left( \vs_i - \vs_j\right) \\ = 
	-\frac{\mu}{\rho^2} m_i\vA(\nu(t))\vs_i,
	\end{multline}
	where in the last equality we used the assumption that $\vs$ is a central
	configuration. Equating both sides  of equation~\eqref{eq:newton} we obtain 
	\begin{equation}
	(\ddot\rho -{\dot\nu}^2\rho) \vs_i + 
	( 2 \dot\rho\dot\nu  +\rho \ddot\nu )\vJ_2\vs_i =  -\frac{\mu}{\rho^2} \vs_i. 
	\end{equation}
	Vectors $\vs_i$ and $ \vJ_2\vs_i  $ are orthogonal, this is why the 
	coefficients of their linear combination vanish simultaneously. This gives equations \eqref{eq:kep2}.
\end{proof}
Later, we consider only these particular solutions for which $\dot\rho\neq0$.  These particular solutions define the phase curves $\Gamma$ 
which are algebraic curves in $\scM$.  By
Proposition~\ref{pro:indep}, along these solutions integrals $H$ and $C$ are
functionally independent.
\section{Variational equations}

We first compute the variational equations in $\widetilde{\scM}$ along the solution $(\vr(t),\vp(t))$ given
by~\eqref{eq:homgr}. They have the form 
\begin{equation}
\label{var1}
\dot\vR=\vM^{-1}\vP, \qquad \dot\vP =-\nabla^2 V(\vr(t))\vR.
\end{equation} 
In further considerations, it will be convenient to work with the system of second-order differential
equations 
\begin{equation}
\label{eq:var_mat}
\vM \ddot\vR = -\nabla^2 V(\vr(t))\vR. 
\end{equation}
It is natural to set $\vR=(\vR_1, \ldots, \vR_n)$,  and 
$\vP=\vM\dot\vR=(\vP_1, \ldots, \vP_n)$, so $\vP_i=
m_i\dot\vR_i$, for $i=1, \ldots, n $ belong to the tangent space of the phase space at a point $(\vr(t),\vp(t))$.

Let us recall that for the planar Kepler problem the energy and angular momentum integrals have the forms 
\begin{equation}
\label{eq:hc}
h = \frac{1}{2} \left({\dot \rho}^2 + \rho^2 {\dot \nu}^2 \right)- 
\frac{\mu}{\rho}, \qquad c = {\rho^2}{\dot \nu}.
\end{equation}
It is convenient to use the true anomaly $\nu$ as the independent variable and this
is why we will assume $c\neq 0$. Moreover, we make the following change of the variables
\begin{equation}
\label{eq:9}
\vR= \rho(\nu) \widehat\vA(\nu) \vM^{-1/2}\vX, \qquad \vX=(\vX_1, \ldots, \vX_n).
\end{equation}
To perform calculations, we need the following relations 
\begin{equation}
\label{eq:8}
\frac{\rmd\phantom{x}}{\rmd t}=\dot\nu \frac{\rmd\phantom{x}}{\rmd \nu},\quad 
\frac{\rmd^{2}\phantom{x}}{\rmd t^2}=
{\dot\nu}^2 \frac{\rmd^{2}\phantom{x}}{\rmd \nu^2} +
\ddot\nu \frac{\rmd\phantom{x}}{\rmd \nu}.
\end{equation}
From equations~\eqref{eq:kep2} we obtain 
\begin{equation}
\dot \nu = \frac{c}{\rho^2}, \qquad 
\ddot \nu = -\frac{2c^2}{\rho^5}\rho', 
\qquad \dot\rho = \dot \nu \rho', \qquad \ddot\rho = {\dot \nu}^2 \rho'' + 
\ddot \nu \rho'
\end{equation}
and 
\begin{equation}
\rho'' = \rho -\frac{\mu \rho^2}{c^2} + 2 \frac{\rho'^2}{\rho}.
\end{equation}
In the above formulae the prime denotes the differentiation with respect to $\nu$. 
Now, direct calculations give
\begin{equation}
\label{eq:10}
\ddot\vR_i=\frac{\mu}{\sqrt{m_i}\rho(\nu)^2} \vA(\nu)\left[
\frac{c^2}{\mu\rho(\nu)}\left(\vX_i''  + 2 \vJ_2\vX_i'\right) -\vX_i  \right].
\end{equation}
Hence,
\begin{multline}
\label{eq:11}
\vM\ddot \vR = \frac{\mu}{\rho(\nu)^2}\widehat\vA(\nu) \vM^{1/2}\left[
\frac{c^2}{\mu\rho(\nu)}\left(\vX''  + 2 \widehat\vJ\vX'\right) 
-\vX \right]\\=
-\frac{1}{\rho(\nu)^2} \widehat\vA(\nu) \vH(\vs) \vM^{-1/2}\vX,  
\end{multline}
where $\widehat\vJ=\diag(\vJ_2, \ldots, \vJ_2) $.  Let us notice also that
\begin{equation}
\label{eq:13}
\vM\widehat\vA(\nu)=\widehat\vA(\nu)\vM \mtext{and} 
\widehat\vJ\widehat\vA(\nu)=\widehat\vA(\nu)\widehat\vJ.
\end{equation}
In effect, the transformed variational equations read
\begin{equation}
\label{eq:14}
f(\nu) \left(\vX''  + 2 \widehat\vJ\vX'\right)=(\vI_{2n} - \widetilde{\vH}(\vs) )\vX,
\end{equation}
where
\begin{equation}
\label{eq:15}
f(\nu):= \frac{c^2}{\mu\rho(\nu)}, \qquad \widetilde{\vH}(\vs) = \frac{1}{\mu}\vM^{-1/2}\vH(\vs)\vM^{-1/2}.
\end{equation}
With the above parametrization we have 
\begin{equation}
\label{eq:P}
\vP = \vM\dot\vR= \frac{c}{\rho^2}\widehat{\vA}(\nu)\vM^{1/2} \left( \rho' \vX +\rho \widehat{\vJ} \vX + \rho\vX'\right).
\end{equation}

F.\,R.~Moulton in \cite{Moulton:1910::} proved that for every ordering of $n$
positive masses, there exists a unique collinear central configuration of $n$ bodies that interact gravitationally. We fix such a central configuration and assume that $\vs_i =(x_i,0)$
for $i=1, \ldots, n$. Now, our objective is to describe the spectrum of the matrix
$\widetilde{\vH}(\vs)$ defined in~\eqref{eq:15}. It has the
same block structure as matrix $\vH(\vs)$, and  
\begin{equation}
\label{eq:HijM}
\widetilde{\vH}_{ij}(\vs)=\frac{1}{\mu \sqrt{m_i m_j}}\vH_{ij}(\vs) = 
C_{ij} \vD, \qquad \vD = 
\begin{bmatrix}
-2  & 0 \\
0  & 1
\end{bmatrix},
\end{equation}
for $i\neq j$
\begin{equation}
C_{ij} = -\frac{\sqrt{m_i m_j}}{\mu\abs{x_i - x_j}^3},
\label{eq:Cij}
\end{equation}
and
\begin{equation}
C_{ii} = -\sump_{j=1}^n\sqrt{\frac{m_j}{m_i} }C_{ij}.
\label{eq:Cii}
\end{equation}
The matrix $\widetilde{\vH}(\vs)$ and the matrix $\vC=[C_{ij}]$ are
symmetric. 
\begin{proposition}
	\label{pro:Heig}
	Vectors $\vM^{1/2}\vs$ and $\vM^{1/2}\widehat{\vJ}\vs$ are eigenvectors of the matrix
	$\widetilde{\vH}(\vs)$ with eigenvalues $-2$ and $1$, respectively.
\end{proposition}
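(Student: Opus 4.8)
The plan is to recognize the two eigenvectors as coming from the two continuous symmetries of the Newtonian potential, scaling and planar rotation, rather than to read them off the explicit block entries $C_{ij}\vD$. By the definition \eqref{eq:15}, $\widetilde{\vH}(\vs)=\mu^{-1}\vM^{-1/2}\vH(\vs)\vM^{-1/2}$ with $\vH(\vs)=\nabla^2V(\vs)$. Since $\vM^{-1/2}\vM^{1/2}=\vI_{2n}$ and $\vM^{-1/2}\vM=\vM^{1/2}$, the two assertions are equivalent to the matrix identities
\[
\vH(\vs)\vs=-2\mu\,\vM\vs,\qquad \vH(\vs)\widehat{\vJ}\vs=\mu\,\vM\widehat{\vJ}\vs;
\]
indeed, applying $\widetilde{\vH}(\vs)$ to $\vM^{1/2}\vs$ and to $\vM^{1/2}\widehat{\vJ}\vs$ and substituting these identities returns $-2\,\vM^{1/2}\vs$ and $\vM^{1/2}\widehat{\vJ}\vs$, respectively. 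So I would reduce the proposition to these two identities, which hold for an arbitrary central configuration, not only the collinear one.

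For the eigenvalue $-2$ I would use that the potential \eqref{eq:V} is homogeneous of degree $-1$. Euler's relation then reads $\vr\cdot\nabla V(\vr)=-V(\vr)$, and differentiating it once more in $\vr$ gives $\vH(\vr)\vr=-2\,\nabla V(\vr)$ at every point. Evaluating at $\vs$ and inserting the defining equation \eqref{eq:cc}, $\nabla V(\vs)=\mu\vM\vs$, yields the first identity $\vH(\vs)\vs=-2\mu\vM\vs$ at once.

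For the eigenvalue $1$ I would differentiate the central configuration equation along the rotation orbit of $\vs$. Because $V$ is $\mathrm{SO}(2)$-invariant one has $\nabla V(\widehat{\vA}\vr)=\widehat{\vA}\nabla V(\vr)$, and, as noted after \eqref{eq:lam}, each $\widehat{\vA}(\theta)\vs$ is again a central configuration with the same $\mu$, so that $\nabla V(\widehat{\vA}(\theta)\vs)=\mu\vM\widehat{\vA}(\theta)\vs$ for all $\theta$ (using also $\vM\widehat{\vA}=\widehat{\vA}\vM$). Differentiating this in $\theta$ at $\theta=0$ and using that the $\theta$-derivative of $\widehat{\vA}(\theta)$ at $\theta=0$ equals $\widehat{\vJ}$, the chain rule on the left produces $\vH(\vs)\widehat{\vJ}\vs$ and the right side produces $\mu\vM\widehat{\vJ}\vs$, which is the second identity.

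I do not expect a genuine obstacle here; the computations are short and the only point requiring care is matching each symmetry to its eigenvalue, with scaling (degree $-1$) giving $-2$ and rotation giving $1$. As an independent cross-check that stays within the collinear setup and the notation already introduced, I note that \eqref{eq:HijM}--\eqref{eq:Cii} exhibit $\widetilde{\vH}(\vs)$ as the Kronecker array $\vC\otimes\vD$ with blocks $C_{ij}\vD$; since $\vD=\diag(-2,1)$ has eigenvectors $\be_1,\be_2$ with eigenvalues $-2,1$, and since $\vM^{1/2}\vs$ and $\vM^{1/2}\widehat{\vJ}\vs$ are exactly $\vu\otimes\be_1$ and $\vu\otimes\be_2$ with $\vu=(\sqrt{m_i}\,x_i)_{i=1}^n$, the whole claim collapses to $\vC\vu=\vu$. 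That last equality is nothing but the $x$-component of the central configuration equation \eqref{eq:cc_exp} divided by $m_i$, so it holds by construction.
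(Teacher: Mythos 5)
Your proof is correct and takes essentially the same route as the paper: Euler's identity for the degree $-1$ homogeneous potential yields $\vH(\vs)\vs=-2\nabla V(\vs)=-2\mu\vM\vs$, and differentiating $\nabla V(\widehat{\vA}(\theta)\vs)=\mu\vM\widehat{\vA}(\theta)\vs$ at $\theta=0$ yields $\vH(\vs)\widehat{\vJ}\vs=\mu\vM\widehat{\vJ}\vs$, exactly as in the paper's proof. Your supplementary cross-check reducing the claim to $\vC\vu=\vu$ with $\vu=(\sqrt{m_i}\,x_i)$ is also sound and is consistent with the paper's later choice $\vvv_{n-1}^{(1)}=\vM^{1/2}\vs$, $\vvv_{n-1}^{(2)}=\vM^{1/2}\widehat{\vJ}\vs$ corresponding to the eigenvalue $\lambda_{n-1}=1$ of $\vC$.
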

\begin{proof}
	Components of gradient $\nabla V(\vr)$ are homogeneous functions of degree $-2$.
	Thus, by the Euler identity
	\begin{equation}
	\nabla^2V(\vr)\vr = -2 \nabla V(\vr).
	\end{equation}
	Evaluating both sides of this identity at $\vr=\vs$ we obtain 
	\begin{equation}
	\nabla^2V(\vs)\vs = \vH(\vs) \vs =   -2 \nabla V(\vs) = -2 \mu\vM \vs. 
	\end{equation}
	Thus,  
	\begin{equation}
	\widetilde{\vH}(\vs)\vM^{1/2}\vs = \frac{1}{\mu}\vM^{-1/2}\vH(\vs)\vs = 
	\frac{1}{\mu} \vM^{-1/2}\left( -2 \mu\vM \vs\right)= -2 \vM^{1/2}\vs.
	\end{equation}
	We already mentioned that if $\vs$ is a central configuration then $\widehat{\vA}\vs$ is
	also a central configuration for arbitrary $\vA\in\mathrm{SO}(2,\R)$. Therefore, we have the following 
	\begin{equation}
	\nabla V( \widehat{\vA}(\theta)\vs) = \mu \vM \widehat{\vA}(\theta)\vs.
	\end{equation}
 
	Differentiating both sides of this equality at $\theta=0$ we get 
	\begin{equation}
	\nabla^2 V( \vs)\widehat{\vJ}\vs = \mu \vM \widehat{\vJ}\vs.
	\end{equation}
	Hence,
	\begin{equation}
	\widetilde{\vH}(\vs)\vM^{1/2}\widehat{\vJ}\vs =\frac{1}{\mu}\vM^{-1/2}\vH(\vs)\widehat{\vJ}\vs= \vM^{1/2}\widehat{\vJ}\vs.
	\end{equation}
\end{proof}
\begin{proposition}
	Let $\widehat{\va}=(\va, \ldots, \va)\in\R^{2n}$,  where $\va\in\R^2$ is an
	arbitrary vector.  Then $\vM^{1/2}\widehat{\va}$ belongs to the kernel of the matrix $\widetilde{\vH}(\vs)$.
\end{proposition}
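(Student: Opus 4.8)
The plan is to reduce the claim to the single identity $\vH(\vs)\widehat{\va}=\vzero$ and then to establish that identity from the block structure recorded in~\eqref{eq:Hij}--\eqref{eq:Hii}. Since $\widetilde{\vH}(\vs)=\frac{1}{\mu}\vM^{-1/2}\vH(\vs)\vM^{-1/2}$ by~\eqref{eq:15}, and $\vM^{-1/2}$ is invertible, we have
\begin{equation*}
\widetilde{\vH}(\vs)\vM^{1/2}\widehat{\va}=\frac{1}{\mu}\vM^{-1/2}\vH(\vs)\vM^{-1/2}\vM^{1/2}\widehat{\va}=\frac{1}{\mu}\vM^{-1/2}\vH(\vs)\widehat{\va}.
\end{equation*}
Hence $\vM^{1/2}\widehat{\va}$ lies in the kernel of $\widetilde{\vH}(\vs)$ if and only if $\widehat{\va}$ lies in the kernel of $\vH(\vs)$, because $\vM^{-1/2}$ has trivial kernel. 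Thus everything comes down to showing $\vH(\vs)\widehat{\va}=\vzero$.

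The main (and essentially only) step is then to compute the $i$-th block row of $\vH(\vs)\widehat{\va}$. Because every block in a fixed row is applied to the same constant vector $\va$, this block equals $\sum_{j=1}^n\vH_{ij}(\vs)\va$. I would split off the diagonal term and insert the defining relation~\eqref{eq:Hii}, namely $\vH_{ii}(\vs)=-\sump_{j=1}^n\vH_{ij}(\vs)$, so that
\begin{equation*}
\sum_{j=1}^n\vH_{ij}(\vs)\va=\vH_{ii}(\vs)\va+\sump_{j=1}^n\vH_{ij}(\vs)\va=-\sump_{j=1}^n\vH_{ij}(\vs)\va+\sump_{j=1}^n\vH_{ij}(\vs)\va=\vzero.
\end{equation*}
Since this holds for every block index $i=1,\ldots,n$, we conclude $\vH(\vs)\widehat{\va}=\vzero$, which by the reduction above completes the argument.

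I would also record the conceptual reason behind this computation, since it makes the statement transparent and robust. The potential $V$ in~\eqref{eq:V} depends only on the differences $\vr_i-\vr_j$, hence is invariant under the simultaneous translation $\vr\mapsto\vr+t\widehat{\va}$. Differentiating $\nabla V(\vr+t\widehat{\va})=\nabla V(\vr)$ with respect to $t$ at $t=0$ yields $\nabla^2V(\vr)\widehat{\va}=\vzero$ for every $\vr$, in particular at $\vr=\vs$; this is precisely what the block relation~\eqref{eq:Hii} encodes. There is no genuine obstacle here: the only points requiring care are that $\widehat{\va}$ is the block-constant vector, so that~\eqref{eq:Hii} applies verbatim, and that the factor $\vM^{1/2}$ cannot destroy kernel membership because it is invertible.
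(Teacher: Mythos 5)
Your proof is correct and follows essentially the same route as the paper, which simply notes that $\widehat{\va}$ lies in the kernel of $\vH(\vr)$ for arbitrary $\vr$ by \eqref{eq:Hij} and \eqref{eq:Hii}; you merely spell out the block-row cancellation and the trivial reduction through the invertible factor $\vM^{1/2}$ that the paper leaves implicit. The translation-invariance remark is a nice conceptual gloss but adds nothing beyond the paper's argument.
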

\begin{proof}
	It is enough to notice that $\widehat{\va}$  belongs to the kernel of the matrix
	$\vH(\vr)$ for an arbitrary $\vr$, see \eqref{eq:Hij} and \eqref{eq:Hii}.
\end{proof}

Let $\vvv=(v_1, \ldots,v_n)$ be an eigenvector of the matrix $\vC$, defined in~\eqref{eq:Cij} and \eqref{eq:Cii}, with the
corresponding eigenvalue $\lambda$. Then, one can directly check that the vector
$\vvv^{(1)}=(v_1, 0,v_2,0, \ldots, v_n,0)$ is an eigenvector of
$\widetilde{\vH}(\vs)$ with eigenvalue $-2\lambda$. Similarly,
$\vvv^{(2)}=(0,v_1, 0,v_2, \ldots, 0,v_n)$, is an eigenvector of
$\widetilde{\vH}(\vs)$ with eigenvalue $\lambda$. Notice also that
$\widehat{\vJ}\vvv^{(1)}=\vvv^{(2)}$, and $\widehat{\vJ}\vvv^{(2)}=-\vvv^{(1)}$.
Thus, if $\vvv_1, \ldots, \vvv_n$ is a
normalized eigenbasis of matrix $\vC$, then matrix
$\vT=\left[\vvv_1^{(1)},\vvv_1^{(2)},\ldots, \vvv_n^{(1)},\vvv_n^{(2)} \right]$
diagonalizes matrix $\widetilde{\vH}(\vs)$, $\vT\vT^T=\vI_{2n}$. Moreover,
$\vT^T\widehat{\vJ}\vT= \widehat{\vJ}$. 

Now, it is easy to show that  map 
\begin{equation}
\vX =\vT\vY, \qquad \vY = (\vY_1, \ldots, \vY_n),
\end{equation}
transforms equation~\eqref{eq:14} into a direct product of equations 
\begin{equation}
\label{eq:vsplit}
f(\nu)\left(\vY_k''  + 2 \vJ_2\vY_k'\right)=\vG_k\vY_k, 
\qquad  \vG_k = \diag( 3+2 \delta_k, -\delta_k),
\end{equation}
where $\delta_k = \lambda_k  -1$, and $\lambda_1, \ldots, \lambda_n$ are
eigenvalues of the matrix $\vC$. 

We can assume that
\begin{gather}
\vvv_{n}^{(1)}= \vM^{1/2}\widehat{\va},  \qquad \vvv_{n}^{(2)}=\vM^{1/2}\widehat{\vJ}\widehat{\va}, \quad \va=(1,0) , \\
\vvv_{n-1}^{(1)}= \vM^{1/2}\vs, \qquad \vvv_{n-1}^{(2)}=\vM^{1/2}\widehat{\vJ}\vs.
\end{gather}
Now we have to ensure that the sub-system consisting of the first $(n-2)$ equations \eqref{eq:vsplit} describes a variation tangent to $\scM_{h,c}$. Recall that from Proposition \ref{pro:indep}, we know that $\scM_{h,c}$ is a submanifold of $\widetilde{\scM}$ that is smooth along our particular solution.

Let us evaluate  
\begin{equation}
\begin{split}
\nabla H(\vr, \vp) \cdot(\vR,\vP) = &\pder{H}{\vr}\cdot\vR + \pder{H}{\vp}\cdot\vP, \mtext{and}\\
\nabla C(\vr,\vp) \cdot(\vR,\vP) = &\pder{C}{\vr}\cdot\vR + \pder{C}{\vp}\cdot\vP,    
\end{split}  
\end{equation}
for our particular solution and $\vR$ and $\vP$ given by~\eqref{eq:9} and
\eqref{eq:P}, respectively. 
Using equations~\eqref{dVrt} and~\eqref{dotrt}, one gets 
\begin{equation}
\pder{H}{\vr}= \pder{V}{\vr}= \frac{\mu}{\rho^2} \vM\widehat{\vA}(\nu)\vs,
\qquad \pder{H}{\vp}= \dot\vr= \frac{c}{\rho^2}  \widehat{\vA}(\nu) \left[ \rho' \vI_{2n}  + \rho \widehat{\vJ}\right]\vs,
\end{equation}
and 
\begin{equation}
\pder{C}{\vr}= \widehat{\vJ}^T\vp =  \frac{c}{\rho^2} \widehat{\vJ}^T \vM\widehat{\vA}(\nu) \left[ \rho' \vI_{2n}  + \rho \widehat{\vJ}\right]\vs, \qquad \pder{C}{\vp}= \widehat{\vJ}\vr = \rho \widehat{\vJ} \widehat{\vA}(\nu)\vs.
\end{equation}
To proceed we will use the fact that matrices $\widehat{\vJ} $, $\widehat{\vA}$
and $\vM$ pair-wise commute. We get the following
\begin{equation}
\label{eq:gr1}
\pder{H}{\vr}\cdot\vR = \left(\frac{\mu}{\rho^2} \vM\widehat{\vA}(\nu)
\vs \right)\cdot \left(\rho\widehat\vA(\nu) \vM^{-1/2}\vX\right)=
\frac{\mu}{\rho} \left(\vM^{1/2}\vs\right)\cdot\vX = \vvv_{n-1}^{(1)}\cdot\vX
\end{equation}
and 
\begin{equation}
\label{eq:gr2}
\begin{split}
\pder{H}{\vp}\cdot \vP= &\frac{c^2}{\rho^4} 
\left( \widehat{\vA}(\nu) \left[ \rho' \vI_{2n} + \rho \widehat{\vJ}\right]
\vs \right)\cdot \left(\widehat{\vA}(\nu)\vM^{1/2} \left[ \rho' \vX +\rho 
\widehat{\vJ} \vX + \rho\vX'\right]\right) \\
=& \frac{c^2}{\rho^4} \left(  \left[ \rho' \vI_{2n}  + \rho \widehat{\vJ}
\right]\vs \right)\cdot \left(\vM^{1/2} \left[ \rho' \vX +
\rho \widehat{\vJ} \vX + \rho\vX'\right]\right) \\
=& \frac{c^2}{\rho^4} \left(\vM^{1/2}\vs\right)\cdot \left( 
\left[ \rho' \vI_{2n}  + \rho \widehat{\vJ}^T\right]
\left[ \rho' \vX +\rho \widehat{\vJ} \vX + \rho\vX'\right]\right) \\
=&  \frac{c^2}{\rho^4}  \left[ (\rho^2 +\rho'^2)\vvv_{n-1}^{(1)}\cdot\vX + \rho\rho'\vvv_{n-1}^{(1)}\cdot\vX' + \rho^2 \vvv_{n-1}^{(2)}\cdot\vX' 
\right].
\end{split}
\end{equation}
Similarly, we obtain 
\begin{equation}
\label{eq:gr3}
\begin{split}
\pder{C}{\vr}\cdot\vR = &  \frac{c}{\rho}\left[ - 
\rho' \vvv_{n-1}^{(2)}\cdot\vX  + \rho \vvv_{n-1}^{(1)}\cdot\vX\right], \\ 
\pder{C}{\vp}\cdot\vP = &  
\frac{c}{\rho}\left[ \rho' \vvv_{n-1}^{(2)}\cdot \vX + 
\rho \vvv_{n-1}^{(1)}\cdot\vX  +\rho \vvv_{n-1}^{(2)}\cdot\vX' \right].
\end{split}
\end{equation}

\begin{proposition}
	Vector $(\vY,\vY')$  with $\vY_{n-1}=\vY_{n} = \vY_{n-1}'=\vY_{n}'=\vzero$ is
	tangent to $\scM_{h,c}$.
\end{proposition}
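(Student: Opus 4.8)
The plan is to show that the tangent vector $(\vR,\vP)$ determined through \eqref{eq:9} and \eqref{eq:P} by the data $(\vY,\vY')$ annihilates every defining function of $\scM_{h,c}$ inside $\widetilde{\scM}$. Since $\scM_{h,c}=\widetilde{\scM}\cap\{\vr_{\mathrm{s}}=\vzero,\ \vp_{\mathrm{s}}=\vzero,\ H=h,\ C=c\}$, and the parametrization \eqref{eq:9}--\eqref{eq:P} automatically yields a vector tangent to $\widetilde{\scM}$ (the distance coordinates $\vd$ being slaved to $\vr$, so that $\vR$ is unconstrained there), it suffices to verify the four scalar conditions $\nabla H\cdot(\vR,\vP)=0$, $\nabla C\cdot(\vR,\vP)=0$, and the vanishing of the variations of $\vr_{\mathrm{s}}$ and of $\vp_{\mathrm{s}}$.

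The observation I would use throughout is that $\vT$ is orthogonal, $\vT\vT^T=\vI_{2n}$, with columns $\vvv_1^{(1)},\vvv_1^{(2)},\ldots,\vvv_n^{(1)},\vvv_n^{(2)}$. Hence from $\vX=\vT\vY$ one reads off $\vvv_k^{(1)}\cdot\vX$ and $\vvv_k^{(2)}\cdot\vX$ as the two components of $\vY_k$, and $\vvv_k^{(1)}\cdot\vX'$, $\vvv_k^{(2)}\cdot\vX'$ as the components of $\vY_k'$; this remains valid for complex $\vX,\vY$ because $\vT$ is real. In particular the hypotheses $\vY_{n-1}=\vY_{n-1}'=\vzero$ give
\[
\vvv_{n-1}^{(1)}\cdot\vX=\vvv_{n-1}^{(2)}\cdot\vX=\vvv_{n-1}^{(1)}\cdot\vX'=\vvv_{n-1}^{(2)}\cdot\vX'=0,
\]
and $\vY_n=\vY_n'=\vzero$ give the same four vanishings with $\vvv_{n-1}$ replaced by $\vvv_n$. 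Substituting the first group into the already-established expressions \eqref{eq:gr1}, \eqref{eq:gr2}, \eqref{eq:gr3}, I note that every term on their right-hand sides is a multiple of one of $\vvv_{n-1}^{(1)}\cdot\vX$, $\vvv_{n-1}^{(2)}\cdot\vX$, $\vvv_{n-1}^{(1)}\cdot\vX'$, $\vvv_{n-1}^{(2)}\cdot\vX'$; summing the two pieces of each gradient therefore yields $\nabla H\cdot(\vR,\vP)=0$ and $\nabla C\cdot(\vR,\vP)=0$. This settles the energy and angular-momentum conditions using only the vanishing of $\vY_{n-1}$.

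The remaining, computationally more delicate, step treats the centre-of-mass conditions with the vanishing of $\vY_n$. Here I would recover the total momentum and the centre of mass by pairing $\vP$ and $\vM\vR$ against $\widehat{\va}$ and $\widehat{\vJ}\widehat{\va}$, whose blocks are $(1,0)$ and $(0,1)$, so that the variations of $\vr_{\mathrm{s}}$ and $\vp_{\mathrm{s}}$ are exactly these pairings. Using \eqref{eq:9}, \eqref{eq:P}, the commutations \eqref{eq:13}, the identity $\widehat{\vA}(\nu)^T=\widehat{\vA}(-\nu)$ together with $\vM^{1/2}\widehat{\vA}(\nu)^T\widehat{\va}=\cos\nu\,\vvv_n^{(1)}-\sin\nu\,\vvv_n^{(2)}$ and the relations $\widehat{\vJ}\vvv_n^{(1)}=\vvv_n^{(2)}$, $\widehat{\vJ}\vvv_n^{(2)}=-\vvv_n^{(1)}$, one rewrites each pairing as a combination, with coefficients built from $\rho,\rho',\cos\nu,\sin\nu,c$, of the four contractions $\vvv_n^{(1)}\cdot\vX$, $\vvv_n^{(2)}\cdot\vX$, $\vvv_n^{(1)}\cdot\vX'$, $\vvv_n^{(2)}\cdot\vX'$. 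All of these vanish by the second group of relations above, whence the variations of $\vr_{\mathrm{s}}$ and $\vp_{\mathrm{s}}$ are $\vzero$.

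I expect this last bookkeeping with rotations and transposes to be the only genuine obstacle; it becomes transparent once one records that $\vvv_n^{(1)},\vvv_n^{(2)}$ are precisely the $\vM^{1/2}$-images of the two translation generators $\widehat{\va}$, $\widehat{\vJ}\widehat{\va}$, so that killing the $\vY_n$-mode is exactly the statement of staying in the centre-of-mass manifold $\scM$. Collecting the four vanishings then shows that $(\vR,\vP)$ annihilates $\nabla H$, $\nabla C$ and the two centre-of-mass functions, i.e. is tangent to $\scM_{h,c}$, as claimed.
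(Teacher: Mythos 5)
Your proof is correct, and on the essential step it coincides with the paper's own argument: the paper likewise uses the orthogonality $\vT\vT^T=\vI_{2n}$ to read off the components of $\vY_{n-1},\vY_{n-1}',\vY_n,\vY_n'$ as the contractions $\vvv_k^{(1)}\cdot\vX$, $\vvv_k^{(2)}\cdot\vX$, $\vvv_k^{(1)}\cdot\vX'$, $\vvv_k^{(2)}\cdot\vX'$, and then simply observes that every term in \eqref{eq:gr1}, \eqref{eq:gr2} and \eqref{eq:gr3} is a multiple of one of the four contractions with $\vvv_{n-1}^{(1)},\vvv_{n-1}^{(2)}$, hence vanishes. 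Where you genuinely go further is in scope: the paper's proof explicitly verifies only $\nabla H\cdot(\vR,\vP)=0$ and $\nabla C\cdot(\vR,\vP)=0$, leaving tacit both the tangency to $\widetilde{\scM}$ (your remark that $\vd$ is slaved to $\vr$ is valid since $\vd\in(\CC^\ast)^{n(n-1)/2}$ keeps us away from collisions) and, more substantially, the tangency to the centre-of-mass manifold $\scM$, i.e.\ the vanishing of the variations of $\vr_{\mathrm{s}}$ and $\vp_{\mathrm{s}}$. You supply this missing verification correctly: pairing $\vM\vR$ and $\vP$ against $\widehat{\va}$ and $\widehat{\vJ}\widehat{\va}$, and using $\vM^{1/2}\widehat{\vA}(\nu)^T\widehat{\va}=\cos\nu\,\vvv_n^{(1)}-\sin\nu\,\vvv_n^{(2)}$ together with $\widehat{\vJ}^T=-\widehat{\vJ}$ (so that, e.g., $(\widehat{\vJ}\vX)\cdot\vvv_n^{(1)}=-\vX\cdot\vvv_n^{(2)}$), reduces everything to contractions with $\vvv_n^{(1)},\vvv_n^{(2)}$, all killed by $\vY_n=\vY_n'=\vzero$. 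This makes explicit the very reason the $\vY_n$-mode was split off — its defining columns are the $\vM^{1/2}$-images of the translation generators — and in this respect your write-up is the more complete of the two; the paper uses the hypothesis $\vY_n=\vY_n'=\vzero$ nowhere visible in its displayed check. Two cosmetic remarks only: $\vr_{\mathrm{s}}=\vzero$ and $\vp_{\mathrm{s}}=\vzero$ are each two scalar conditions, so you verify six scalar conditions rather than four; and in \eqref{eq:gr1} the right-hand side carries a factor $\mu/\rho$ in front of $\vvv_{n-1}^{(1)}\cdot\vX$, which is immaterial since, as you say, every term is a multiple of one of the vanishing contractions.
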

\begin{proof}
	We have to show that if  $\vY_{n-1}=\vY_{n} = \vY_{n-1}'=\vY_{n}'=\vzero$, then
	$\nabla H(\vr, \vp) \cdot(\vR,\vP)=0$ and $\nabla C(\vr,\vp) \cdot(\vR,\vP)=0$. 
	We express these conditions in terms of variables $(\vX,\vX')$.
	Notice that 
	\begin{equation}
	\begin{split}
	& \vY_n =(\vvv_n^{(1)}\cdot\vX,\vvv_n^{(2)}\cdot\vX ), \qquad 
	\vY_n' =(\vvv_n^{(1)}\cdot\vX',\vvv_n^{(2)}\cdot\vX' ), \\
	& \vY_{n-1} =(\vvv_{n-1}^{(1)}\cdot\vX,\vvv_{n-1}^{(2)}\cdot\vX ), \qquad 
	\vY_{n-1}' =(\vvv_{n-1}^{(1)}\cdot\vX',\vvv_{n-1}^{(2)}\cdot\vX' ).
	\end{split}
	\end{equation} 
	Inspecting all terms in \eqref{eq:gr1}, \eqref{eq:gr2} and \eqref{eq:gr3} one
	can notice that all of them vanish, so, our claim is proved. 
\end{proof}

The following lemma characterizes the spectral properties of the matrix $\vC$ with entries defined in \eqref{eq:Cij} and \eqref{eq:Cii}. 
\begin{lemma}
	The matrix $\vC$  has real eigenvalues $\lambda_1,
	\ldots, \lambda_n$. Furthermore, $\lambda_n=0$, $\lambda_{n-1}=1$, and
	$\lambda_i>1$ for $i=1, \ldots, n-2$. 
\end{lemma}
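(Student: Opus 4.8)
The plan is to split the statement into the elementary spectral facts, which follow by direct computation, and the strict inequality $\lambda_i>1$, which is the genuinely hard part and for which Conley's theorem is indispensable.

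Since $\vC=[C_{ij}]$ is a real symmetric matrix, all its eigenvalues are automatically real, which disposes of the first assertion. Next I would exhibit two eigenvectors explicitly. Setting $\vvv=(\sqrt{m_1},\ldots,\sqrt{m_n})$ and using the identity $\sum_{j\neq i}\sqrt{m_j}\,C_{ij}=-\sqrt{m_i}\,C_{ii}$, which is a restatement of the definition \eqref{eq:Cii}, one checks at once that $(\vC\vvv)_i=C_{ii}\sqrt{m_i}+\sum_{j\neq i}C_{ij}\sqrt{m_j}=0$, so $\lambda_n=0$. For the value $1$, take $\vvv=(\sqrt{m_1}x_1,\ldots,\sqrt{m_n}x_n)$. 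Rewriting the collinear central configuration equation \eqref{eq:cc_exp} through $m_im_j/\abs{x_i-x_j}^3=-\mu\sqrt{m_im_j}\,C_{ij}$ and again invoking \eqref{eq:Cii}, a one-line manipulation yields $(\vC\vvv)_i=\sqrt{m_i}x_i$, so this vector is an eigenvector with eigenvalue $1$; note it is orthogonal to the previous one because $\sum_i m_ix_i=0$ in the centre-of-mass frame.

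To organize the remaining spectrum I would record the Laplacian structure of $\vC$. Writing $a_{ij}=1/(\mu\abs{x_i-x_j}^3)>0$, where $\mu>0$ by \eqref{eq:lam} since $V(\vs)<0$ and $I(\vs)>0$, one has $\vC=\vB^{-1}\vL\vB^{-1}$ with $\vB=\diag(\sqrt{m_1},\ldots,\sqrt{m_n})$ and $\vL$ the weighted graph Laplacian given by $\vL_{ij}=-m_im_ja_{ij}$ for $i\neq j$ and vanishing row sums. As every $a_{ij}>0$, the underlying graph is complete, hence connected, so $\vL$ is positive semidefinite with a one-dimensional kernel; by Sylvester's law of inertia the congruent matrix $\vC$ is positive semidefinite with a simple zero eigenvalue. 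Thus $\lambda_n=0$ is simple and every other eigenvalue is positive. Each eigenvector $\vvv$ of $\vC$ corresponds, as explained just before the lemma, to the perpendicular perturbation $\vvv^{(2)}$ of $\widetilde{\vH}(\vs)$ with the same eigenvalue, so the $n$ eigenvalues of $\vC$ are exactly the $n$ off-axis stability coefficients; two of these, the overall $y$-translation and the infinitesimal rotation, are the trivial modes already identified with $\lambda=0$ and $\lambda=1$.

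The main obstacle is the strict inequality $\lambda_i>1$ for the remaining $n-2$ genuine perpendicular modes, and this is where I would invoke Conley's theorem, Theorem~3.1 in \cite{Pacella:87::}. The bridge is a second-variation computation: if $\vvv$ is an eigenvector of $\vC$ with eigenvalue $\lambda$ and $\vw=\vM^{-1/2}\vvv^{(2)}$, then $\vH(\vs)\vw=\lambda\mu\vM\vw$ by Proposition~\ref{pro:Heig} and \eqref{eq:15}, so the Hessian of $V$ restricted to the inertia ellipsoid $\{\,\vr^T\vM\vr=\const\,\}$ evaluates on $\vw$ to $\vw^T\vH(\vs)\vw-\mu\,\vw^T\vM\vw=(\lambda-1)\mu\,\vw^T\vM\vw$, where moreover $\vw$ lies in the tangent space $\vw^T\vM\vs=0$ because $\vvv^{(2)}$ is supported on coordinates disjoint from $\vM^{1/2}\vs$. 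Since $\mu>0$ and $\vw^T\vM\vw>0$, the sign of this second variation is precisely the sign of $\lambda-1$. Conley's theorem asserts that a collinear central configuration is a strict minimum of $V$ on the inertia ellipsoid in the directions transverse to the line, once translations and rotations are quotiented out; equivalently, the transverse Hessian is positive definite on the $(n-2)$-dimensional space of nontrivial perpendicular displacements. Translating this through the identity above gives $\lambda-1>0$ on exactly those $n-2$ modes, which, together with the simple eigenvalues $0$ and $1$ found above, exhausts the spectrum and proves $\lambda_i>1$ for $i=1,\ldots,n-2$. The delicate point, and the only place a nontrivial external input is needed, is this sign of the transverse second variation: it cannot be read off from the Laplacian structure, which yields merely $\lambda\geq 0$, and genuinely relies on the equivariant Morse-theoretic analysis of collinear configurations.
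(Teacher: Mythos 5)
Your proof is correct, and the key external input coincides with the paper's, but the texture is genuinely different: the paper's entire proof is a two-line citation, observing that $\mu\,\vm^{-1}\vC\vm$, with $\vm=\diag(\sqrt{m_1},\ldots,\sqrt{m_n})$, coincides with the matrix $\vA$ appearing in Theorem~3.1 of \cite{Pacella:87::}, whose conclusion (Conley's theorem) is already the full spectral statement --- eigenvalues $0$, $\mu$, and the remaining $n-2$ strictly greater than $\mu$ --- so after rescaling by $\mu$ nothing is left to prove. You instead reprove the elementary parts from scratch and localize the appeal to Conley to the single strict inequality. Your computations check out: $\sqrt{m_i}\,C_{ii}=-\sum_{j\neq i}\sqrt{m_j}\,C_{ij}$ does give $(\vC\vvv)_i=0$ for $\vvv=(\sqrt{m_1},\ldots,\sqrt{m_n})$; rewriting the collinear central configuration equation as $m_im_j/\abs{x_i-x_j}^3=-\mu\sqrt{m_im_j}\,C_{ij}$ does give $(\vC\vvv)_i=\sqrt{m_i}\,x_i$ for $\vvv=(\sqrt{m_i}\,x_i)$; the congruence $\vC=\vB^{-1}\vL\vB^{-1}$ to a weighted Laplacian of the complete graph with positive weights (using $\mu>0$, which follows from \eqref{eq:lam}) correctly yields, via Sylvester's law of inertia, positive semidefiniteness with a simple kernel; and your identity $\vw^T\bigl(\vH(\vs)-\mu\vM\bigr)\vw=(\lambda-1)\,\mu\,\vw^T\vM\vw$ for $\vw=\vM^{-1/2}\vvv^{(2)}$, together with the observation that $\vw^T\vM\vs=0$ because $\vvv^{(2)}$ is supported on the $y$-coordinates, is the correct dictionary between the spectrum of $\vC$ and the transverse second variation; since the orthogonal complement of the two trivial eigenvectors is invariant under the symmetric matrix $\vC$, the exhaustion argument is airtight. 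One caveat: Theorem~3.1 of \cite{Pacella:87::} is stated as the spectral fact about the matrix $\vA$ itself, not as the Morse-theoretic assertion of transverse minimality that you quote; the two formulations are equivalent precisely via your second-variation identity, so your argument in effect re-derives, from a geometric repackaging of the cited theorem, facts the literal citation already contains. That is logically harmless --- the hard content is identical --- and your route buys a self-contained explanation of why $0$ and $1$ occur (the perpendicular translation and rotation modes) and why the spectrum is nonnegative independently of Conley, at the cost of length; the paper's route buys brevity.
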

\begin{proof}
	Notice that matrix 
	\begin{equation}
	\mu \vm^{-1}\vC\vm, \qquad \vm=\diag(\sqrt{m_1}, \ldots, \sqrt{m_n}),
	\end{equation} 
	coincides with the matrix $\vA$ which appears in Theorem 3.1 in \cite{Pacella:87::}.
	Thus,  our statement follows from this theorem. 
\end{proof}

From the above follows that perturbations such that $\vY_{n-1}=\vY_{n} = \vY_{n-1}'=\vY_{n}'=\vzero$ define a subspace of the variational equation restricted to the tangent of $\scM_{h,c}$, and these perturbations satisfy equations \eqref{eq:vsplit} with $k=1,\dots, n-2$.

\section{Proof of Theorem}
\label{sec:proof} 

Fix the energy value $h$ and the value of angular momentum $c$. As we already
mentioned, if $c=0$ and $h\neq0$ then the statement of our theorem was proved,
see Theorem~21 in \cite{Combot497649}. Thus, we assume that $c\neq 0$. For given
positive masses $m_1, \ldots, m_n$ we take the Euler--Moulton collinear central
configuration~$\vs$, and  the Kepler homographic solution $\vr(t)$ with chosen
values of $(h,c)$, corresponding to  this configuration, see
Proposition~\ref{pro:homographic}. Clearly, $(\vr(t),\vvv(t))$ defines a
particular solution in $\scM_{h,c}$.  The variational equation in variables
$(\vX, \vX')$, see~\eqref{eq:14}, splits into four-dimensional subsystems
corresponding to variables $(\vY_k,\vY_k')$, see~\eqref{eq:vsplit}.  

If the system is integrable on $\scM_{h,c}$, then, according to
Corollary~\ref{hardcore}, the identity component of the differential Galois
group of variational equation ~\eqref{eq:14} is solvable.  Hence,  the
differential Galois group of each four-dimensional subsystem~\eqref{eq:vsplit}
has a solvable identity component. 

We consider one equation~\eqref{eq:vsplit} with $\delta_k>0$ that corresponds to
$\lambda_k>1$, and we rewrite it as a system of first-order equations
\begin{equation}
\label{eq:vifirst}
\vz' = \vB(\nu) \vz , \qquad \vB(\nu)=\begin{bmatrix}
\vzero & \vI_2 \\
\dfrac{1}{f(\nu)}\vG & -2\vJ_2
\end{bmatrix},
\end{equation} 
where $\vz = (\vY, \vV)$.  Because the
index $k$ is fixed, we simply omit it. 

\subsection{Case $2hc^2\neq-\mu^2$}

First we consider a real solution of the Kepler problem~\eqref{eq:kep2} given by 
\begin{equation}
\rho(\nu)= \frac{c^2/\mu}{1+e\cos \nu},   \qquad 2hc^2 = \mu^2(e^2-1). 
\end{equation}
For this solution $f(\nu)=1+e\cos \nu$, and if $e=0$, then
equation~\eqref{eq:vifirst} is autonomous, hence it does not give any
obstruction for the integrability. Thus, we assume here that $e>0$, or,
equivalently, that $2hc^2\neq -\mu^2$.

We show that  equation~\eqref{eq:vifirst} splits into a direct
product of two-dimensional sub-systems 
\begin{equation}
\label{eq:map2}
\vu' = \vB_\pm(\nu)\vu, \qquad \vu=(u_1, u_2),
\end{equation}
where matrices $\vB_\pm(\nu)$ have the form 
\begin{equation}
\label{eq:Bpm}
\vB_\pm(\nu) = \begin{bmatrix}
-\dfrac{e (2 \delta +3) \sin (\nu )}{f(\nu) g(\nu)}  & 
\dfrac{b_{12} \pm \Delta }{4 f(\nu) g(\nu)}  \\
-\dfrac{b_{21}\pm \Delta }{4 f(\nu) g(\nu)} &
-\dfrac{e (\delta +2f(\nu)) \sin (\nu )}{f(\nu) g(\nu)} 
\end{bmatrix},
\end{equation}
where $g(\nu)= 4+ 3\delta+e\cos \nu$, and 
\begin{align}
b_{12}&= 4 e^2 \cos (2 \nu )+2 e^2+8 e (\delta +2) \cos (\nu )-
9 \delta  (\delta +1)+4, \nonumber \\ 
b_{21}&=2 e^2+8 e (2 \delta +3) \cos (\nu )+(3 \delta +4) (3 \delta +7),\nonumber \\
\Delta^2 &= 4 e^4+4 e^2 (\delta +1) (\delta +4)+(\delta +1) (3 \delta +4)^2 (9 \delta +1).
\label{eq:Delta2}
\end{align}
Notice that $\Delta^2>0$ for positive $\delta$ and $e$. We obtain desired splitting  
making transformation $\vz\mapsto \vT(\nu)\vz$ with  matrix 
\begin{equation}
\vT(\nu):= 
\begin{bmatrix}
\vI_2 & \vI_2 \\
\vB_+(\nu) &\vB_-(\nu)
\end{bmatrix}, \qquad \det  \vT(\nu) = \frac{\Delta^2}{4 f(\nu)^2 g(\nu)^2 }.
\end{equation} 
In effect we obtain system 
\begin{equation}
\vz' = \widetilde{\vB}(\nu) \vz, \qquad   \widetilde{\vB}(\nu)=
\vT(\nu)^{-1}\left(  \vB(\nu)  \vT(\nu)  -\vT'(\nu) \right),
\end{equation}
where 
\begin{equation}
\label{eq:tBpm}
\widetilde{\vB}(\nu)=
\vT(\nu)^{-1}\left(  \vB(\nu)  \vT(\nu)  -\vT'(\nu) \right) = 
\begin{bmatrix}
\vB_+(\nu) & \vzero \\
\vzero & \vB_-(\nu)
\end{bmatrix}.
\end{equation}
Therefore, our claim is proved. The described transformation is just a simple
generalization of a similar transformation found for the variational equations
around libration points of the elliptic restricted three-body problem by
J.~Tschauner in~\cite{Tschauner:1971::,Tschauner:1971::a}.

For our further analysis, we take equation~\eqref{eq:map2} with matrix $\vB_+(\nu)$.  We rewrite it as a second-order scalar differential equation for the function $u= u_2$.  This equation reads 
\begin{equation}
\label{eq:sec_nu} 
u'' + a_1(\nu)u' + a_0(\nu) u=0,
\end{equation} 
with coefficients
\begin{equation}
\begin{split}
a_1(\nu) =& \frac{8 (2 \delta +3) e \sin\nu }{h(\nu)}, \\
a_0(\nu) = & \frac{2 \left(-5 (\delta +1) (3 \delta +4)+\Delta +2 e^2\right)}{h(\nu)}+\frac{\delta}{f(\nu)}+2,
\end{split}
\end{equation}
where
\begin{equation*}
h(\nu)=9 \delta ^2+33 \delta +\Delta +2 e^2+8 (2 \delta +3) e \cos \nu +28.
\end{equation*}

By assumptions $e>0$, therefore we can introduce a new independent variable $z =
1 + e\cos \nu$  in equation~\eqref{eq:sec_nu}. We obtain  
\begin{equation}
\label{eq:sec_zz}
u''  + b_1(z) u' + b_0(z) u=0,
\end{equation} 
where
\begin{equation}
\begin{split}
b_1(z)=&\frac{z-1}{(z-e-1) (z+e-1)}-\frac{8 (2 \delta +3)}{l(z)},
\\
b_0(z) =&- \frac{1}{z(z-e-1) (z+e-1) l(z)}\Big[\delta  \left(\delta  (9 \delta +17)+\Delta +2 e^2+4\right)\\
&+4 z \left((\delta -3) \delta +\Delta +2 e^2-8\right)+16 (2 \delta +3)
z^2\Big],
\end{split}
\label{eq:bbz}
\end{equation}
with
\begin{equation*}
l(z)= 8 (2 \delta +3) z+ \delta  (9 \delta +17)+\Delta +2 e^2+4.
\end{equation*}
The prime in~\eqref{eq:sec_zz} denotes the differentiation with respect to $z$.

The following lemma plays a key role and its proof is given in Appendix~\ref{append:main}. 
\begin{lemma}
	\label{lem:main}
	If $e>0$, and $\delta>0$, then the identity component of the differential Galois group over the field $\CC(z)$ of equation \eqref{eq:sec_zz} with coefficients \eqref{eq:bbz} is not solvable. 
\end{lemma}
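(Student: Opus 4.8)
The plan is to compute the differential Galois group $G\subseteq\mathrm{SL}(2,\CC)$ of \eqref{eq:sec_zz} by first passing to the invariant normal form $w''=r(z)w$ through $u=w\exp\!\left(-\tfrac12\int b_1\,\rmd z\right)$, so that $r=\tfrac14 b_1^2+\tfrac12 b_1'-b_0\in\CC(z)$. The identity component $G^0$ is solvable exactly when \eqref{eq:sec_zz} admits Liouvillian solutions, so by the Kovacic algorithm it suffices to exclude the three solvable cases (reducible, imprimitive, finite primitive) and conclude $G=\mathrm{SL}(2,\CC)$, whose identity component is not solvable. First I would read off the singular structure of $r$: regular singular points at $z=0$, at $z=1\pm e$, at the root $z_0$ of $l(z)$, and at $z=\infty$, with exponent differences $1$ at $0$, $\tfrac12$ at each of $1\pm e$, $2$ at $z_0$, and $3$ at $\infty$, verifying en route that the point $z_0$ produced by the splitting \eqref{eq:Bpm} is an \emph{apparent} singularity. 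Thus \eqref{eq:sec_zz} is a Heun-type equation whose exponents are fixed but whose accessory data vary with $(\delta,e,\Delta)$.

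The finite (primitive) case can be excluded structurally for all admissible parameters. Since every exponent difference is a half-integer, the local monodromies at $0$, $z_0$, $\infty$ have coincident eigenvalues (unipotent type), while those over $1\pm e$ have order dividing $4$. If $G$ were finite, all monodromies would be semisimple, forcing $0,z_0,\infty$ to be apparent; the group would then be generated by the two involutions sitting over $z=1\pm e$, hence dihedral and imprimitive, contradicting primitivity. This rules out the finite case whenever $e>0,\delta>0$.

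To exclude the imprimitive case I would exhibit a genuine logarithm: every element of the normalizer of a maximal torus is semisimple, so a single non-semisimple local monodromy forbids $G\subseteq N(T)$. I would apply the Frobenius method at $z=0$ with exponents $0,1$; the resonance at the exponent $1$ yields an obstruction whose vanishing is precisely the apparentness condition, and I would compute this obstruction as a rational expression in $(\delta,e,\Delta)$ and, reducing modulo the relation \eqref{eq:Delta2} for $\Delta^2$, show it cannot vanish for $e>0,\delta>0$ (with the analogous computation at $\infty$, exponents $\{-1,2\}$, as a fallback should $z=0$ degenerate at special parameters). A logarithm at $0$ or $\infty$ then excludes both the imprimitive and, once more, the finite cases.

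The remaining and principal task is to exclude reducibility. A one-dimensional invariant subspace corresponds to a rational solution $\omega$ of the Riccati equation $\omega'+\omega^2+b_1\omega+b_0=0$, i.e.\ a hyperexponential $u_1=\exp\int\omega$. The local exponents confine the residues of $\omega$ to $\{0,1\}$ at $0$, $\{0,\tfrac12\}$ at each $1\pm e$, $\{0,2\}$ at $z_0$, and to nonnegative integers at the ordinary zeros of $u_1$; the behavior at $\infty$ (exponents $\{-1,2\}$) together with the residue theorem forces $\sum\mathrm{res}\,\omega=2$. As all admissible residues are nonnegative, this leaves only finitely many candidate shapes: the quadratic polynomials $(z-z_0)^2$, $z(z-z_1)$, $(z-z_1)(z-z_2)$, or the algebraic functions $z\sqrt{(z-1-e)(z-1+e)}$ and $(z-z_1)\sqrt{(z-1-e)(z-1+e)}$. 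I would substitute each into \eqref{eq:sec_zz}, clear denominators, and reduce the identities modulo \eqref{eq:Delta2} to polynomial conditions in $(\delta,e)$, aiming to show none is satisfiable for $e>0,\delta>0$. This uniform elimination over the two-parameter family, complicated by the algebraic parameter $\Delta$, is the hard part: the candidate list is short but each substitution is heavy, and one must also dispatch the finitely many special parameter values where the residue bookkeeping or the apparentness of $0,\infty$ could degenerate. Once reducibility is excluded, $G=\mathrm{SL}(2,\CC)$ and the lemma follows.
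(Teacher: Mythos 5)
Your proposal is correct and follows essentially the same route as the paper's proof in Appendix~\ref{append:main}: reduce to normal form, exhibit a logarithmic local solution at $z=0$ (with the logarithm at infinity as fallback --- exactly what the paper uses in the degenerate case $e=1$), which disposes of the imprimitive and finite cases, and exclude reducibility by the exponent/residue bookkeeping of Kovacic's first case; indeed, once the logarithm at $z=0$ is established, your five candidate shapes collapse (the three nonvanishing at $z=0$ are excluded outright) to precisely the paper's two admissible tuples $\vveps_1,\vveps_2$. What you defer as ``the hard part'' --- the explicit eliminations modulo the relation \eqref{eq:Delta2} and the three non-generic families $e=1$, $e=3\delta+4$, $2e=3\delta+4$ detected via the discriminant and resultant of the denominator --- is exactly the bulk of what the paper's appendix carries out.
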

From this lemma, we immediately deduce.
\begin{corollary}
	\label{cor:c2}
	If  $2hc^2\neq-\mu^2$, then the planar $n$-body problem, with $n>2$,
	restricted to the level $\scM_{h,c}$, is not integrable.
\end{corollary}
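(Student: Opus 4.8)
The plan is to prove Corollary~\ref{cor:c2} by combining Lemma~\ref{lem:main} with the reductions already assembled in the excerpt. First I would recall the setup: for $c\neq 0$ and $2hc^2\neq-\mu^2$ we have the Kepler homographic solution of the Euler--Moulton collinear central configuration, with eccentricity $e>0$ by the relation $2hc^2=\mu^2(e^2-1)$. By Proposition~\ref{pro:homographic} this is a genuine solution of the $n$-body equations, and by Proposition~\ref{pro:indep} together with the fact that $\dot\rho\neq0$ (equivalently $e>0$), the integrals $H$ and $C$ are functionally independent along the corresponding phase curve $\Gamma\subset\scM_{h,c}$. Since the ideal $\mathcal{I}$ defining $\scM_{h,c}$ was shown to be prime, $\scM_{h,c}$ is an irreducible algebraic variety, so Corollary~\ref{hardcore} applies: \emph{if} the $n$-body problem were integrable restricted to $\scM_{h,c}$, then the identity component of the differential Galois group of the variational equations along $\Gamma$ would be solvable.

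Next I would invoke the splitting of the variational equations. After the change of variables \eqref{eq:9} and the diagonalization by the matrix $\vT$, equation~\eqref{eq:14} decouples into the four-dimensional subsystems \eqref{eq:vsplit} indexed by the eigenvalues $\lambda_k$ of the matrix $\vC$. The crucial point, established by the Lemma on the spectrum of $\vC$ (via Conley's theorem, Theorem~3.1 in \cite{Pacella:87::}), is that $\lambda_i>1$ for $i=1,\dots,n-2$; since $n>2$, at least one such eigenvalue exists, giving a subsystem with $\delta_k=\lambda_k-1>0$. Moreover, as noted after that Lemma, the subsystems with $k=1,\dots,n-2$ correspond precisely to perturbations tangent to $\scM_{h,c}$ (those with $\vY_{n-1}=\vY_n=\vY_{n-1}'=\vY_n'=\vzero$), so they are genuine variational equations of the restricted system and not artifacts of the unreduced ambient space.

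I would then fix one index $k\le n-2$ with $\delta:=\delta_k>0$, write the subsystem as the first-order system \eqref{eq:vifirst}, and apply the Tschauner-type transformation $\vz\mapsto\vT(\nu)\vz$ to split it into the two-dimensional systems \eqref{eq:map2} governed by $\vB_\pm(\nu)$. Because $\Delta^2>0$ for $e,\delta>0$, the transformation is well defined (invertible where $f g\neq0$) and the splitting \eqref{eq:tBpm} is valid. Taking the $\vB_+$ block and passing to the scalar second-order equation \eqref{eq:sec_nu}, then changing the independent variable to $z=1+e\cos\nu$, produces equation~\eqref{eq:sec_zz} with coefficients \eqref{eq:bbz}. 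Lemma~\ref{lem:main} asserts that for $e>0$ and $\delta>0$ the identity component of the differential Galois group of \eqref{eq:sec_zz} over $\CC(z)$ is \emph{not} solvable. The differential Galois group of the full variational equation contains, as a quotient (or subquotient via the invariant subspace corresponding to this block), the Galois group of this scalar equation; hence its identity component cannot be solvable either, since solvability of the identity component is inherited by closed subgroups and quotients. This contradicts the necessary condition from Corollary~\ref{hardcore}.

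Assembling these steps yields the contradiction: the existence of a block with $\delta>0$ whose associated scalar equation has non-solvable identity component of its Galois group rules out restricted integrability. The genuinely hard analytic step is Lemma~\ref{lem:main} itself, whose proof is deferred to Appendix~\ref{append:main}; everything else here is the organization of the reductions. The one subtlety I would be careful about is the passage from the Galois group of the decoupled scalar block to that of the full system \eqref{eq:14}: I would justify that the two-dimensional $\vB_+$ factor appears as a subquotient of the variational system over $\CC(z)$ (the base field is the same after the rational change of variable, since $z=1+e\cos\nu$ and $f(\nu)=z$ are rational in the natural algebraic coordinate on $\Gamma$), so that non-solvability of the identity component propagates upward. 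Granting Lemma~\ref{lem:main} and this standard group-theoretic transfer, Corollary~\ref{cor:c2} follows at once for all $n>2$ and all positive masses.
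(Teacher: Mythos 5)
Your proposal is correct and follows essentially the same route as the paper: Corollary~\ref{hardcore} applied along the Keplerian homographic solution of the Euler--Moulton configuration, the block splitting with $\delta_k=\lambda_k-1>0$ guaranteed by Conley's theorem, the Tschauner-type reduction to the scalar equation~\eqref{eq:sec_zz}, and Lemma~\ref{lem:main}. The one imprecision is your base-field bookkeeping: the differential field of rational functions on $\Gamma$ is $\CC(\cos\nu,\sin\nu)$, and since $\sin\nu$ appears in the coefficients this is \emph{not} equal to $\CC(z)$ after the substitution $z=1+e\cos\nu$, but is an algebraic extension of degree $2$ (as $\sin\nu=\sqrt{1-\cos^2\nu}$) --- indeed the paper's entire proof of this corollary consists of observing that the composition $t\mapsto\nu\mapsto z$ is an algebraic change of variable, under which the identity component of the differential Galois group is invariant, so non-solvability over $\CC(z)$ from Lemma~\ref{lem:main} transfers to the field on $\Gamma$. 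With that standard invariance fact substituted for your parenthetical claim that the base field ``is the same,'' your argument is complete.
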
 
\begin{proof}
Our proof is based on Corollary~\ref{hardcore}. The particular solution is given
by the Keplerian homographic solution $(\vr(r),\vp(t))$. Therefore, the base
differential field is $K=\CC(\vr(t),\vp(t))$.  We have to show that over this
field the identity component of differential Galois group of variational
equations is not solvable.   To show this, we make  two transformations. The
first transcendental transformation is  $t \mapsto \nu$, where $\nu$ is the true
anomaly.  As a result the differential field is $ \widetilde
K=\CC(\vr(t(\nu)),\vp(t(\nu))=\CC(\cos(\nu),\sin(\nu)) $. In our final step we
made transcendental transformation  $\nu\mapsto z = 1 + e \cos(\nu)$. In effect,
we get differential field $L$ which is an algebraic extension of $\CC(z)$ of
degree $2$, as $\sin(\nu)=\sqrt{1-\cos(\nu)^2}$. Therefore, the composition of
these transformations is an algebraic variable change, and so preserves the
identity component of the differential Galois group of considered variational
equations.   
Hence, Lemma~\ref{lem:main} proves our claim.
\end{proof}

\subsection{Case $2hc^2=-\mu^2$}
The energy integral~\eqref{eq:hc}  expressed in the variable $f(\nu)=c^2/(\mu\rho(\nu))$ reads 
\begin{equation}
\label{eq:hfn}
h = \frac{\mu^2}{2 c^2}\left[f'(\nu)^2 +  (f(\nu)-1)^2 - 1\right].  
\end{equation} 
Thus, if $2hc^2=-\mu^2$, then 
\begin{equation}
\label{eq:df0}
f'(\nu )= \pm\rmi (f(\nu)-1), 
\end{equation}
and we can take $f(\nu)= 1 + \rme^{\rmi\nu}$ as a solution of this equation.
With this choice,  equation~\eqref{eq:vifirst} splits into a direct product of
two-dimensional subsystems~\eqref{eq:map2} with 
\begin{equation}
\label{eq:Bpme0}
\vB_\pm(\nu) = \begin{bmatrix}
\dfrac{\rmi \rme^{\rmi \nu} (2 \delta +3)}{f(\nu) g(\nu)}  & 
\dfrac{b_{12} -(3\delta+4) (3\delta-1\pm\Delta)}{4 f(\nu) g(\nu)}  \\[1em]
\dfrac{b_{21} - (3\delta+4) (3\delta+7\mp\Delta) }{4 f(\nu) g(\nu)} &
\dfrac{\rmi \rme^{\rmi \nu}( \delta +2f(\nu))}{f(\nu) g(\nu)} 
\end{bmatrix},
\end{equation}
where $g(\nu)= 4+ 3\delta+\rme^{\rmi \nu}$, and 
\begin{align}
b_{12}&= 8 (\delta +2) \rme^{\rmi \nu }+8 \rme^{2 \rmi \nu }, \nonumber \\ 
b_{21}&=-8 (2 \delta +3) \rme^{\rmi \nu },\nonumber \\
\Delta^2&= 9\delta^2 +10\delta+1.
\label{eq:Delta22}
\end{align}
Similarly to the previous case,  the gauge map $\vz\mapsto \vT(\nu)\vz$  defined
by   the matrix 
\begin{equation}
\vT(\nu):= 
\begin{bmatrix}
\vI_2 & \vI_2 \\
\vB_+(\nu) &\vB_-(\nu)
\end{bmatrix}, \qquad \det  \vT(\nu) = 
\frac{(3\delta+4)^2\Delta^2}{4 f(\nu)^2 g(\nu)^2 },
\end{equation} 
transforms  the system~\eqref{eq:vifirst}  to the form~\eqref{eq:map2}. 

For our further  analysis,  we take equation~\eqref{eq:map2} with matrix
$\vB_+(\nu)$ defined by~\eqref{eq:Bpme0}.  We introduce a new independent
variable $ z = 1 + \rme^{\rmi\nu}$, and  rewrite it as a scalar second order
differential equation for the function $u= u_2$. It has the form
\begin{equation}
\label{eq:sec0}
u'' + b_1(z) u' + b_2(z)=0, 
\end{equation}
where 
\begin{equation}
\label{eq:b1b2}
\begin{split}
b_1(z) = & \frac{1}{z-1}-\frac{8 (2 \delta +3)}{l(z)}, \\ 
b_2(z) = & \frac{1}{(z-1)^2}\left[ \frac{8 (2 \delta +3) (3 \delta +2 z+2)}{l(z)}-\frac{\delta }{z}-4\right], \\
l(z)= &\delta  (9 \delta -3 \Delta +17)-4 \Delta +8 (2 \delta +3) z+4.
\end{split}
\end{equation}
\begin{lemma}
	\label{lem:main0}
	If $\delta>0$, then the identity component of the  differential Galois group over the field $\CC(z)$ of
	equation \eqref{eq:sec0} with coefficients \eqref{eq:b1b2} is not solvable. 
\end{lemma}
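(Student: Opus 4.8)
The plan is to compute the differential Galois group $G$ of \eqref{eq:sec0} over $\CC(z)$ and show that it is all of $\mathrm{SL}(2,\CC)$ (up to the harmless scalar factor coming from the first-order term), which is equivalent to non-solvability of $G^0$. First I would put the equation in normal form $w''=r w$ through $u=w\exp(-\tfrac12\int b_1\,dz)$, with $r=\tfrac14 b_1^2+\tfrac12 b_1'-b_2$; since the identity component of the Galois group is unchanged by this gauge (the factor $g$ satisfies $g'/g=-\tfrac12 b_1\in\CC(z)$), it suffices to treat $w''=rw$, whose group lies in $\mathrm{SL}(2,\CC)$. Recall that a closed subgroup of $\mathrm{SL}(2,\CC)$ has solvable identity component if and only if it is finite, or conjugate into a Borel subgroup (reducible), or conjugate into the normalizer $N(T)$ of a maximal torus (imprimitive); these are exactly the three integrable cases of Kovacic's algorithm. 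From $b_1=\tfrac{1}{z-1}-\tfrac{8(2\delta+3)}{l(z)}$ (residues $1$ at $z=1$, $-1$ at the root $z_0$ of $l$, regular at $z=0$) and the simple pole of $b_2$ at $z=0$, the equation is Fuchsian with exactly four regular singular points $0,1,z_0,\infty$. A short computation gives the local exponents of $w$: $\{0,1\}$ at $z=0$, $\{-\tfrac12,\tfrac32\}$ at $z_0$, $\{-1,2\}$ at $\infty$, and $\{\tfrac12\pm\mathrm{i}\sqrt B\}$ at $z=1$, where $B=\tfrac12(\delta-1+\Delta)$. The decisive fact, which I would check at the outset, is that $B>0$ for all $\delta>0$: this reduces to $\Delta>1-\delta$, i.e. to $8\delta^2+12\delta>0$. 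Thus the exponent difference at $z=1$ is the nonzero purely imaginary number $2\mathrm{i}\sqrt B$, while at the other three points it is an integer.

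Because all singularities are regular singular, $G$ is the Zariski closure of the monodromy group, generated by $M_0,M_1,M_{z_0},M_\infty$ subject to $M_0 M_1 M_{z_0} M_\infty=\mathrm{Id}$. The monodromy $M_1$ has eigenvalues $-\mathrm{e}^{\mp 2\pi\sqrt B}$, of modulus $\neq 1$, so it is regular semisimple, of infinite order, and non-central; this alone excludes the finite case. To exclude reducibility I would use the residue theorem: a common invariant line would yield a solution with $\phi=w'/w\in\CC(z)$, whose residues at $0,z_0,\infty$ are real local exponents, whose residue at $z=1$ is $\tfrac12\pm\mathrm{i}\sqrt B$, and whose remaining poles (zeros of $w$) contribute positive integers. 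The sum of all residues of $\phi\,dz$ vanishes, yet its imaginary part is the nonzero $\pm\sqrt B$ contributed solely by $z=1$ — a contradiction. Hence $G$ is irreducible.

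The remaining step, excluding the imprimitive case $G\subseteq N(T)$, is the one I expect to be the main obstacle, because the residue trick fails here: on the quadratic cover attached to the dihedral case the two imaginary exponents at $z=1$ are interchanged and their contributions cancel, so no arithmetic contradiction arises from $z=1$ alone. The resolution is to use the monodromy relation together with the integrality of the exponent differences at $0,z_0,\infty$: there $M_0,M_\infty$ have both eigenvalues equal to $1$ and $M_{z_0}$ has both equal to $-1$, so each of them is either central ($\pm\mathrm{Id}$) or a nontrivial Jordan block. If all three were semisimple they would be central, forcing $M_1=M_0 M_{z_0}M_\infty{}^{\!-1}\!\!\cdots$ to be central too, contradicting the previous paragraph; hence at least one of $M_0,M_{z_0},M_\infty$ is a non-semisimple element of $G$. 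Since every element of $N(T)$ is semisimple, $G$ cannot be conjugate into any $N(T)$, and the imprimitive case is excluded. With the finite, reducible and imprimitive cases all ruled out, $G=\mathrm{SL}(2,\CC)$ and its identity component is not solvable. The only further care needed in the write-up is bookkeeping for exceptional $\delta$ at which singular points could collide (e.g. $z_0=0$) or an integer exponent difference could make a singularity apparent; one verifies that for $\delta>0$ the exponent configuration above is stable, so the conclusion holds uniformly.
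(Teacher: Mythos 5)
Your proposal is correct, and it reaches the same endpoint ($G=\mathrm{SL}(2,\CC)$) by a genuinely different route for the decisive step. Both arguments rest on the same key computation: with $\Delta^2=9\delta^2+10\delta+1$ as in \eqref{eq:Delta22}, the exponent difference at $z=1$ of the reduced equation \eqref{eq:rre0} is $\sqrt{2}\sqrt{1-\delta-\Delta}$, purely imaginary and nonzero for all $\delta>0$, and both proofs exclude reducibility by the same arithmetic --- your residue-sum argument for $\phi=w'/w$ is exactly the obstruction the paper phrases as the impossibility of a Kovacic Case-1 degree $d=\varepsilon_\infty-(\varepsilon_0+\varepsilon_1+\varepsilon_2)\in\N_0$, since the exponent $\varepsilon_1=\tfrac12\pm\rmi\sqrt{B}$ is the only non-real contribution. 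Where you genuinely diverge is in handling the finite and imprimitive cases. The paper computes explicitly that a logarithm appears in the local solutions at $z=0$ for $\delta\neq 0$ and then invokes Lemma~\ref{lem:log} (logarithmic singularity plus no hyperexponential solution forces $G=\mathrm{SL}(2,\CC)$). You avoid verifying the logarithm altogether: passing to monodromy via Schlesinger density, you observe that at the three integer-difference singularities the local eigenvalues are $\{1,1\}$ or $\{-1,-1\}$, so each generator is either $\pm\id$ or non-semisimple, and the relation $M_0M_1M_{z_0}M_\infty=\id$ makes ``all central'' incompatible with $M_1$, whose eigenvalues $-\rme^{\mp 2\pi\sqrt{B}}$ lie off the unit circle (this simultaneously kills the finite case); since every element of $N(T)\subset\mathrm{SL}(2,\CC)$ is semisimple, the dihedral case is excluded. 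Your dichotomy buys robustness --- it does not care whether any particular integer-difference singularity is apparent or genuinely logarithmic --- at the price of invoking Zariski density of monodromy for Fuchsian systems, which the paper's purely local, Kovacic-style route never needs. One item you should not leave as a wave of the hand: the stability of the singularity configuration for all $\delta>0$ must actually be checked, namely that the root $z_2$ of $l(z)$ never collides with $0$ or $1$ and that no numerator--denominator cancellation in $r(z)$ lowers a pole order; the paper does this by showing the discriminant of $Q(z)=z(z-1)l(z)$ and the resultant of the numerator and denominator of $r(z)$ are nonvanishing for $\delta>0$, using \eqref{eq:Delta22} to eliminate $\Delta$. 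That (routine) verification must appear in a complete write-up, but it does not alter the structure of your argument.
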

We prove this lemma in Appendix~\ref{append:main0}. Using it, we show the
following. 
\begin{corollary}
	\label{cor:c3}
	If  $2hc^2=-\mu^2$, then the planar $n$-body problem, with $n>2$, restricted
	to the level $\scM_{h,c}$, is not integrable.
\end{corollary}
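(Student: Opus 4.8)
The plan is to follow the architecture of the proof of Corollary~\ref{cor:c2} almost verbatim, substituting the complex homographic solution for the Keplerian one and invoking Lemma~\ref{lem:main0} in place of Lemma~\ref{lem:main}. The logical backbone is Corollary~\ref{hardcore}: if the $n$-body problem were integrable restricted to $\scM_{h,c}$, then along any algebraic orbit $\Gamma\subset\scM_{h,c}$ of $\vX_H$ the identity component of the differential Galois group of the variational equations would have to be solvable. It therefore suffices to exhibit one such orbit for which this identity component is \emph{not} solvable.

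First I would fix the particular solution. Since $2hc^2=-\mu^2$ makes the real homographic solution circular, and hence the associated variational equation autonomous and obstruction-free, I take instead the complex solution $f(\nu)=1+\rme^{\rmi\nu}$ of~\eqref{eq:df0}, as anticipated in the introduction. Because $n>2$, the lemma on the spectrum of $\vC$ provides at least one eigenvalue $\lambda_k>1$ with $k\in\{1,\dots,n-2\}$, i.e.\ an index with $\delta_k>0$; by the discussion preceding this section such a perturbation is tangent to $\scM_{h,c}$. I restrict to that four-dimensional subsystem~\eqref{eq:vsplit}, apply the gauge matrix $\vT(\nu)$ assembled from $\vB_\pm(\nu)$ in~\eqref{eq:Bpme0} to split it into the two-dimensional systems~\eqref{eq:map2}, and rewrite the $\vB_+$ block as the scalar equation~\eqref{eq:sec0}.

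The key step is the field bookkeeping needed to transport the conclusion back to the base field $K=\CC(\vr(t),\vp(t))$. The transcendental change $t\mapsto\nu$ reduces $K$ to a field of functions of $\nu$, after which I introduce the independent variable $z=1+\rme^{\rmi\nu}$. Here the situation is in fact \emph{simpler} than in Corollary~\ref{cor:c2}: since $\rme^{\rmi\nu}=z-1$, both $\cos\nu$ and $\sin\nu$ are rational in $z$, so no quadratic square-root extension appears and the coefficients~\eqref{eq:b1b2} already lie in $\CC(z)$. The composite change of variable is thus algebraic (indeed rational), and algebraic changes preserve the identity component of the differential Galois group. Consequently the identity component of the Galois group of~\eqref{eq:sec0} over $\CC(z)$ agrees with that of the original variational subsystem over $K$.

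Finally I would invoke Lemma~\ref{lem:main0}, which asserts that for $\delta>0$ the identity component of the Galois group of~\eqref{eq:sec0} over $\CC(z)$ is not solvable. Together with the gauge and variable changes, this forces the identity component of the Galois group of the full variational equation along $\Gamma$ to be non-solvable, contradicting Corollary~\ref{hardcore}; hence no such integrability can exist. The entire difficulty is concentrated in Lemma~\ref{lem:main0} — the explicit differential Galois analysis of the second-order equation~\eqref{eq:sec0} (presumably via Kovacic's algorithm), deferred to Appendix~\ref{append:main0} — while the corollary itself is only the assembly of the preceding reductions. The one point I would verify carefully is that, unlike the eccentricity subtlety in Corollary~\ref{cor:c2}, the solution $1+\rme^{\rmi\nu}$ is genuinely non-autonomous, so that the reduction to~\eqref{eq:sec0} produces a real obstruction rather than a trivial constant-coefficient equation.
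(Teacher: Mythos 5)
Your proposal is correct and takes essentially the same route as the paper: the paper deduces Corollary~\ref{cor:c3} from Lemma~\ref{lem:main0} by precisely the assembly you describe, mirroring the proof of Corollary~\ref{cor:c2} with the complex solution $f(\nu)=1+\rme^{\rmi\nu}$, the splitting~\eqref{eq:Bpme0}, and the reduction to the scalar equation~\eqref{eq:sec0}. Your field bookkeeping is also right, including the refinement that $z=1+\rme^{\rmi\nu}$ makes the change of variable rational (so, unlike the case $2hc^2\neq-\mu^2$, no quadratic extension of $\CC(z)$ arises), which preserves the identity component of the differential Galois group.
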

Finally, Corollary~\ref{cor:c2} and Corollary~\ref{cor:c3} prove our Theorem~\ref{thm:we}.  
\appendix
\section{Second order differential equations}
\label{app:B}

We recall basic notions and facts concerning linear second-order differential
equations relevant to  this paper. This topic is clearly presented in
concise Chapter X  of the book \cite{Whittaker:35::}.

Consider the second-order linear differential equation in reduced form with
rational coefficient
\begin{equation}
w'' =r(z)w,\qquad r(z)\in\mathbb{C}(z).
\label{eq:pom}
\end{equation}
A point $z=c\in\mathbb{C}$ is a singular point of this equation if it is a pole
of $r(z)$. It is a regular singular point if it is a pole of order not greater
than two. Infinity is a regular singular point if function $z^{-2}r(z^{-1})$ has
a pole $z=0$ of order not higher than two.  We assume that all the singularities
of the considered equation are regular. This type of equation is called
Fuchsian.

Then function $r(z)$ has the following expansions: at a singular point $c\in\CC$
\begin{equation}
r(z)=\frac{\alpha_c}{(z-c)^2} + \scO \left((z-c)^{-1}\right)
\label{eq:alphazi}
\end{equation}
and at  infinity
\begin{equation}
r(z)=\frac{\alpha_\infty}{z^2} + \scO \left(z^{-3}\right).
\label{eq:alphazinf}
\end{equation}

Near each singular point $c$ equation~\eqref{eq:pom} has two linearly
independent solutions  of the form
\begin{equation}
w_{\pm}(z)=(z-c)^{\rho_\pm}f(z),\qquad f(z)=1+\sum_{i=1}^{\infty}f_i^{\pm} (z-c)^i,
\label{eq:solFro}
\end{equation}
where $\rho_\pm$ are   solutions of the so-called indicial equation
\begin{equation}
\rho(\rho-1) -\alpha_c = 0,
\end{equation}
that is
\begin{equation}
\rho_\pm=\frac{1}{2}\left(1 \pm\sqrt{1+4\alpha_c}\right).
\end{equation}
They are called exponents of equation \eqref{eq:pom} at singularity $c$.  Then
substituting the expansion~\eqref{eq:solFro} into equation~\eqref{eq:pom} we
successively determine the coefficient $f_i^{\pm}$ for $i=1,2, \dots$.

This statement is valid under the assumption that the difference of exponents
$\Delta_c=\rho_+-\rho_-= \sqrt{1+4\alpha_c}$ is not an integer. If $s=\Delta_c$
is a nonnegative integer, then the solution $w_+(z)$ has the prescribed
form~\eqref{eq:solFro}. But the second solution $w_-(z)$ has to be determined using
the variation of constants method. It has the form
\begin{equation}
w(z)= g_s w_+(z)\ln(z-c)+(z-c)^{\rho_-} h(z),
\label{pp}
\end{equation}
where function $h(z)$ is holomorphic at $z=c$. The constant coefficient $g_s$
can be determined using the method described in \cite{Whittaker:35::}, and an
explicit formula is given in \cite{mp:13::e}. For small $s$ they are
\begin{equation}
\label{eq:gs}
g_0=1, \qquad g_1= 2f_1^+, \qquad g_2= -2f_2^+ + 3(f_1^+)^2.
\end{equation}
If for a given singularity, logarithmic terms are present in a local  solution,
then we say that this singularity is logarithmic. 

The differential Galois group of equation~\eqref{eq:pom} is a subgroup of
$\mathrm{SL}(2,\CC)$, see, for example, \cite{Kovacic:86::,Morales:99::c}. To test the integrability, we need to decide if it is a proper subgroup of
$\mathrm{SL}(2,\CC)$.  The presence of a logarithmic singularity allows one to
formulate a simple and effective criterion.  Before its formulation, we recall
that the function $w(z)$ is called hyperexponential if $w'(z)/w(z)$ is a rational
function. In other words, it is of the form $w(z)=\exp[\int a(z) \rmd z]$, where
$a(z)$ is a rational function.
\begin{lemma}
	\label{lem:log}
	Let us assume that the differential equation~\eqref{eq:pom} has a logarithmic singularity
	and has no non-zero hyperexponential solution. Then its differential Galois group over $\CC(z)$ 
	is $\mathrm{SL}(2,\CC)$.
\end{lemma}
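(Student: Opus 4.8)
The plan is to read off the two hypotheses as structural constraints on the differential Galois group $G\subseteq\mathrm{SL}(2,\CC)$ of \eqref{eq:pom}, and then invoke the classification of algebraic subgroups of $\mathrm{SL}(2,\CC)$. First I would translate ``no nonzero hyperexponential solution'' into the irreducibility of the action of $G$ on the two-dimensional solution space $V$. Indeed, a $G$-invariant line $\CC w\subseteq V$ forces the logarithmic derivative $w'/w$ to be fixed by every element of $G$, hence rational by the Galois correspondence, so that $w$ is hyperexponential; conversely any hyperexponential solution spans a $G$-invariant line. Thus the absence of hyperexponential solutions says precisely that $G$ stabilizes no line, i.e.\ $G$ acts irreducibly.

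Next I would extract a nontrivial unipotent element of $G$ from the logarithmic singularity. Since the equation is Fuchsian, analytic continuation around the singular point $c$ realizes the local monodromy operator $M$ as an element of $G$. At a logarithmic singularity the exponent difference $\Delta_c=\rho_+-\rho_-$ is a nonnegative integer, so the two monodromy eigenvalues $\rme^{2\pi\rmi\rho_\pm}$ coincide; writing $\lambda$ for their common value, the relation $\rho_++\rho_-=1$ from the indicial equation together with $\det M=1$ gives $\lambda^2=\rme^{2\pi\rmi(\rho_++\rho_-)}=1$, whence $\lambda=\pm1$. Because the singularity is genuinely logarithmic, the coefficient $g_s$ in \eqref{pp} is nonzero, so $M$ is not diagonalizable and is conjugate to $\lambda\left(\begin{smallmatrix}1&1\\0&1\end{smallmatrix}\right)$. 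If $\lambda=1$ then $M$ itself is a nontrivial unipotent; if $\lambda=-1$ then $M^2$ is. In either case $G$ contains a nontrivial unipotent $u$, hence the one-parameter unipotent subgroup $\overline{\langle u\rangle}\cong\mathbb{G}_a$.

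Finally I would carry out the group-theoretic step. As $\mathbb{G}_a$ is connected it lies in the identity component $G^\circ$. Up to conjugacy, the connected algebraic subgroups of $\mathrm{SL}(2,\CC)$ that contain a one-parameter unipotent are only $\mathbb{G}_a$, the Borel subgroup $B$, and $\mathrm{SL}(2,\CC)$ itself, since the torus $\mathbb{G}_m$ and the trivial group contain no nontrivial unipotent. Both $\mathbb{G}_a$ and $B$ fix a \emph{unique} line of $V$; as $G^\circ$ is a normal subgroup of $G$, the group $G$ permutes the invariant lines of $G^\circ$, and uniqueness then forces $G$ to fix that line, contradicting irreducibility. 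Hence $G^\circ=\mathrm{SL}(2,\CC)$ and therefore $G=\mathrm{SL}(2,\CC)$. The same presence of a nontrivial unipotent simultaneously excludes the finite (primitive) subgroups and the infinite dihedral normalizer-of-the-torus (imprimitive) case, whose elements are all semisimple.

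I expect the main obstacle to be the passage from the analytic description of the logarithmic singularity in \eqref{pp} to a genuine unipotent element of the algebraic group $G$. This relies on two points: that \eqref{eq:pom} be Fuchsian, so that local monodromy operators actually sit inside $G$ (cf.\ \cite{Kovacic:86::,Morales:99::c}), and on careful bookkeeping of the eigenvalue $\lambda$, where the case $\lambda=-1$ must be handled by squaring the monodromy rather than using it directly. The irreducibility dictionary and the subgroup classification are standard; the only place something could go wrong is if the semisimple part of the local monodromy failed to be a root of unity, but the constraint $\lambda^2=1$ forced by the determinant-one normalization of \eqref{eq:pom} rules this out.
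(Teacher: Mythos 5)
Your proof is correct and is essentially the standard argument that the paper itself does not spell out but delegates to the references \cite{Morales:99::c} and \cite{Boucher:00::,Boucher:06::}: irreducibility of the Galois action from the absence of hyperexponential solutions, a nontrivial unipotent element of $G$ from the logarithmic local monodromy (with the $\lambda=-1$ case handled by squaring), and the classification of algebraic subgroups of $\mathrm{SL}(2,\CC)$ containing a one-parameter unipotent. One minor remark: the Fuchsian hypothesis is not actually needed for the monodromy operators to lie in $G$ --- analytic continuation around any loop avoiding the singularities yields a differential automorphism of the Picard--Vessiot extension fixing $\CC(z)$ --- so the caution you raise on that point is unnecessary.
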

Let us briefly describe the algorithm that allows us to find a hyperexponential
solution of the Fuchsian equation~\eqref{eq:pom} if it exists. In fact, it is
the first case of the Kovacic algorithm \cite{Kovacic:86::}.

The hyperexponential solution we are  looking for has the following form
\begin{equation}
\label{eq:h-exp}
w(z) = P(z)\exp \left[ \int \omega(z) \rmd z\right],\qquad \omega(z)=\sum_{i=1}^M \frac{\varepsilon_i}{z-z_i},
\end{equation}
where $P(z)$ is a polynomial, $z_i$ are poles of $r(z)$ of the first or the second order and $\varepsilon_i$ is an exponent corresponding to this
singularity  (for each point $z_i$ we have two choices for $\varepsilon$). The
polynomial $P(z)$ has degree $d$ where
\begin{equation}
d = \varepsilon_\infty -\sum_{i=1}^M \varepsilon_i,
\end{equation}
and $\varepsilon_\infty$ is an exponent at infinity. Here, the sum is taken over
all poles of $r(z)$, and in the case of a first-order pole, it is assumed that
there is only one exponent equal to one.  Moreover, the polynomial $P(z)$ must
be a solution of the following equation
\begin{equation}
\label{eq:pol_kov}
P'' +2 \omega(z) P' + (\omega'(z) + \omega(z)^2 - r(z)) P = 0.
\end{equation}
In practice, as a candidate for $P(z)$ one can take a polynomial of degree $d$
with indefinite coefficients and insert it into the above equation. The problem
then reduces to finding solutions of a system of linear equations. For an equation with $m$
singular points,  we have $2^{m+1}$ possible choices of elements
$(\varepsilon_1,\ldots,\varepsilon_m, \varepsilon_\infty)$, so we have at most
such a number of possibilities for the choice of $d$ and $\omega(z)$.

Proof of Lemma~\ref{lem:log} and its numerous applications can be found in
\cite{Morales:99::c} and \cite{Boucher:00::,Boucher:06::}.

\section{Proof of Lemma~\ref{lem:main}}
\label{append:main}

We consider equation 
\begin{equation}
\label{eq:sec_z}
u''  + b_1(z) u' + b_0(z) u=0,
\end{equation} 
where  $ b_0(z)$ and $b_1(z)$ are rational functions given by~\eqref{eq:bbz}.
We transform this equation to the reduced form
\begin{equation}
\label{eq:red+}
v'' = r(z) v, \qquad r(z) =\frac{1}{2}b_1'(z)  + \frac{1}{4}b_1(z)^2 -b_0(z),
\end{equation}
by setting
\begin{equation}
\label{eq:tran}
u = v \exp\left[ -\frac{1}{2} \int^z_{z_\ast} b_1(s)\, \rmd s \right].
\end{equation}
Rational function $r(z)$ reads
\begin{equation}
\label{eq:rr+g}
\begin{split}
& r(z)=-\frac{\delta }{\left(e^2-1\right) z}-\frac{3}{16 (z-e-1)^2}-\frac{3}{16 (z+e-1)^2}+\frac{3}{4\left(z-z_3 \right)^2}\\
&-\frac{6 \delta +5 z+3}{2 (z-e-1)
	(z+e-1) \left(z-z_3\right)}+\frac{33
	\left(e^2-1\right)+8 \delta 
	(z-2)}{8 \left(e^2-1\right)
	(z-e-1) (z+e-1)}, 
\end{split}
\end{equation}
where 
\begin{equation}
z_3 =-\frac{\delta  (9\delta +17)+\Delta +2 e^2+4}{8(2 \delta+3)}.
\end{equation}
We write it also in the form $r(z)=P(z)/\widetilde{Q}(z)$ where the denominator
$\widetilde{Q}(z)$ is 
\begin{equation}
\label{eq:Q(z)}
\begin{split}
\widetilde{Q}(z)&=4 (e-z+1) (e+z-1)\times \\&\left[\delta  (9 \delta +17)+\Delta +2 e^2+8 (2 \delta +3)z+4\right]Q(z),\\
Q(z)&=z (e-z+1) (e+z-1) \left[\delta  (9 \delta +17)+\Delta +2 e^2+8 (2 \delta +3) z+4\right].
\end{split}
\end{equation}
In other words, $ Q(z)$ contains linear in $z$ factors of the denominator.
Singular points of equation~\eqref{eq:red+} with  $r(z)$ given in
\eqref{eq:rr+g} are roots of  $Q(z)$, and the infinity, thus
\begin{equation}
\label{eq:si}
z_0= 0, \quad
z_{1,2}=1 \mp e, \quad z_{3}, \quad z_{\infty}=\infty.
\end{equation}
At first, we consider  the generic case, that is we assume 
\begin{itemize}
	\item all five singular points are pairwise different, and 
	\item $z_0$ is  a pole of the first order, and all remaining singular points
	are poles of the second order. 
\end{itemize}
From  \eqref{eq:rr+g}  we get  expansions  
\begin{equation}
\label{eq:sf}
r(z) = \frac{\alpha_i}{ (z -z_i)^2 } + \scO((z-z_i)^{-1}),
\end{equation}
where
\begin{equation*}
\alpha_0=0,\quad \alpha_{1}=\alpha_{2}=-\frac{3}{16},\qquad \alpha_3=\frac{3}{4}.
\end{equation*}
Moreover,  expansion of  $r(z)$ at infinity is
\begin{equation}
r(z) = \frac{\alpha_{\infty}}{z^2} +\scO(z^{-3}),\qquad \alpha_{\infty}=2.
\label{eq:infinitek}
\end{equation}
The differences of exponents  $\Delta_i$ at the respective singularities according to formula $\Delta_i=\sqrt{1+4\alpha_i}$ are
\begin{equation*}
\Delta_0=1,\quad \Delta_1=\Delta_2=\frac{1}{2}\quad \Delta_3=2,\quad \Delta_{\infty}=3.
\end{equation*}
Since the differences of exponents of local solutions around singularities
$z_0$, $z_3$, and $z_{\infty}$ are integers, one can expect logarithmic terms.
Using formulae~\eqref{eq:gs} we get that if $\delta\neq0$ then there is a local
solution around $z_0$  containing the logarithmic term.

Now, if we show that equation~\eqref{eq:red+} has no non-zero
hyperexponential solution, then, by Lemma~\ref{lem:log}, we conclude the proof
in the considered case. In this aim we have to select all possible choices of
exponents $\vveps=(\varepsilon_0,
,\varepsilon_1,\varepsilon_2,\varepsilon_3,\varepsilon_\infty)$ such that
\begin{equation*}
d = d(\vveps)= \varepsilon_\infty - (\varepsilon_0+ \varepsilon_1+\varepsilon_2+\varepsilon_3) \in \N_0,
\end{equation*}
where $\N_0$ is the set of  non-negative integers. Sets of exponents of local solutions at singularities are
\begin{equation*}
E_0=\{1\},\quad E_1=E_2=\left\{ \frac{1}{4}, \frac{3}{4}\right\},\quad E_3=\left\{ -\frac{1}{2}, \frac{3}{2}\right\},\quad
E_{\infty}=\left\{ -1,2\right\}.
\end{equation*}
We have only two admissible choices of $\vveps$ 
\begin{equation*}
\begin{split}
& \vveps_{1}=  \left(1, \frac{3}{4}, \frac{3}{4},  -\frac{1}{2}, 2 \right)  ,\qquad d\left(\vveps_{1}\right)=0,\\
& \vveps_{2}=  \left(1, \frac{1}{4}, \frac{1}{4},  -\frac{1}{2}, 2 \right)  ,\qquad d\left(\vveps_{2}\right)=1.
\end{split}
\end{equation*}
We have to check whether for a given $\vveps_i$ the respective equation~\eqref{eq:pol_kov} has a polynomial solution $P(z)$  of degree $d(\vveps_{i})$. 
For $ \vveps_1$ we can assume that $P(z)=1$. It is a solution of~\eqref{eq:pol_kov} iff
\begin{equation*}
(\delta +1) (2 \delta +3)=0,\quad -\delta  (\delta  (9 \delta +44)+\Delta +71)-3 (\Delta +12)+2 (7 \delta +9) e^2=0.
\end{equation*}
As $\delta>0$, the first equation does not have a solution.
For  $ \vveps_2$ we take  $P(z)=z+s_1$, and  it is a solution of~\eqref{eq:pol_kov} iff
\begin{equation*}
\begin{split}
&(2 \delta +3) (\delta -s_1)=0,\\
&s_1\left(5 \delta  (\delta +9)-3 \Delta -6 e^2+60\right)-(\delta +6) \left(\delta  (9
\delta +17)+\Delta +2 e^2+4\right)=0,\\
& s_1 \left(-\delta  (\delta  (9 \delta +26)+\Delta +37)-\Delta +2 (7 \delta +11)
e^2-28\right)-2 (e^2-1) \Big(\delta  (9 \delta +17)\\
&   +\Delta +2 e^2+4\Big) =0. 
\end{split}
\end{equation*}
The first equation gives $s_1=\delta$, and  then the remaining two equations simplify to
\begin{equation*}
\begin{split}
&-2 (2 \delta +3) \left(\delta  (\delta +5)+\Delta +2 e^2+4\right)=0,\\
&-\delta  (\delta  (\delta  (9 \delta +26)+\Delta
+19)+\Delta -6)+2 (\Delta +4)-4 e^4-2 e^2 (2 \delta  (\delta +3)+\Delta +2)=0.
\end{split}
\end{equation*}
Since $\delta,\Delta$ and $e$ are non-negative, this system does not have a solution.
Thus, we conclude that equation \eqref{eq:red+} has no hyperexponential solution
and the differential Galois group is $\mathrm{SL}(2,\CC)$.

Now, we consider non-generic cases. If a confluence of singularities occurs,
then the discriminant of  polynomial $Q(z)$ given in~\eqref{eq:Q(z)} 
\begin{equation}
\label{eq:gene1}
\begin{split}
\mathrm{disc}(Q(z))=&
4 e^2 (e-1)^2 (e+1)^2  \left[\delta  (9 \delta +17)+\Delta +2 e^2+4\right]^2\times\\ &\left[9 \delta ^2+33 \delta +\Delta +16 \delta  e+2 e (e+12)+28\right]^2 \times\\
&\left[9 \delta ^2+33 \delta +\Delta
+2 e^2-8 (2 \delta +3) e+28\right]^2
\end{split}
\end{equation}
vanishes. As $e$, $\delta$ and $\Delta$ are positive, we have two possibilities: $e=1$ or $9 \delta ^2+33 \delta +\Delta+2 e^2-8 (2 \delta +3) e+28=0$.
Taking into account the definition of $\Delta$, see~\eqref{eq:Delta2}, we obtain a
condition that, after division by term $(2 \delta +3)>0$, reads 
\begin{equation}
(e-1)(e-3\delta-4) (2 e-3 \delta-4)=0.
\label{eq:inki}
\end{equation}
A non-generic case can occur also when the numerator and the denominator of
$r(z)$ have a common factor that cancels and the orders of poles can change. To
verify this, we calculated the resultant of $P(z)$ and $Q(z)$. It has the form 
\begin{equation}
\operatorname{result}\left(P(z),Q(z)\right)=
\delta  e^4 (e-1)^2 (e+1)^2 (2 \delta +3)^2 S_1S_2S_3S_4,
\end{equation}
where $S_i$ are polynomials in $(e,\delta, \Delta )$. By assumption, all these
variables are positive. Hence, either $e=1$, or $S_i=0$, for a certain $i$.
However, the polynomials $S_i$ are linear in $\Delta$, so, using the
definition~\eqref{eq:Delta2} we can transform condition $S_i=0$ to the form
$S(e,\delta)=0$. Polynomial $S(e,\delta)$ factors, and it vanishes for positive
$e$ and $\delta$  only if condition~\eqref{eq:inki} is fulfilled. In effect, we
have to consider only three non-generic cases when a particular factor in \eqref{eq:inki} vanishes.
Assume that $e=1$. Then $\Delta=\delta  (9 \delta +17)+6$, and $r(z)$ is
\begin{equation}
\begin{split}
r(z)&=-\frac{8 \delta +3}{16 z^2}-\frac{3}{16 (z-2)^2}+\frac{3}{4 \left(\frac{\delta  (9 \delta
		+17)+6}{8 \delta +12}+z\right)^2}\\
&+\frac{\delta  (\delta  (36 \delta +173)+201)+4 (2 \delta +3) (4 \delta +13) z+54}{8 (z-2) z (\delta  (9 \delta +17)+4 (2 \delta +3) z+6)}.
\end{split}
\end{equation}
The expansion of $r(z)$ at infinity still has the form \eqref{eq:infinitek}. The
variational equation \eqref{eq:red+} has four singularities: $z_0=0$, $z_1=2$,
$z_2=-\frac{\delta  (9 \delta +17)+6}{8 \delta +12}$, and infinity. All
singularities are regular. One can check that the discriminant of the
denominator of $r(z)$ is positive, so there is no confluence of singularities.
Moreover, the resultant of the numerator and the denominator of $r(z)$ is also
positive, so they have no common factor. 

The differences of exponents  $\Delta_i$ at the respective singularities are
\begin{equation}
\Delta_0=\frac{1}{2} \sqrt{1-8 \delta },\quad \Delta_1=\frac{1}{2},\quad \Delta_2=2,\quad \Delta_{\infty}=3.
\end{equation} 
There is a local solution at infinity with a logarithmic term. Using
formulae~\eqref{pp}, we find that the multiplier of the logarithmic term is
$g_3=\frac{1}{9} \delta  (\delta +1) (2 \delta +3)>0$. Hence, we will proceed
as in the generic case, and we will prove that the variational equation has no
hyperexponential solution. The sets of exponents are
\begin{equation*}
\begin{split}
E_0&=\left\{\frac{1}{4} \left(\sqrt{1-8 \delta }+2\right),\frac{1}{4} \left(2-\sqrt{1-8 \delta }\right)\right\},\quad E_1=\left\{ \frac{1}{4}, \frac{3}{4}\right\},\quad E_2=\left\{-\frac{1}{2},\frac{3}{2}\right\},\\
E_{\infty}&=\left\{ -1,2\right\}.
\end{split}
\end{equation*}
If $d =\varepsilon_\infty - (\varepsilon_0+ \varepsilon_1+\varepsilon_2) \in
\N_0$,  where $\varepsilon_i\in E_i$, then 
\begin{equation*}
\sqrt{1-8 \delta }=4d+s,\quad s\in\{-7,-5,1,3,5,7,13,15\}.
\end{equation*}
However, $\delta>0$ and $4d+s$ are real, so
$\delta\in\left[\tfrac{1}{8},0\right)$. Hence, $1>\sqrt{1-8 \delta }\geq0$, but
if $4d+s>0$, then non-negative $4d+s\geq 1$. Thus, we conclude that equation
\eqref{eq:red+} in the considered case has no hyperexponential solution, and the
differential Galois group is $\mathrm{SL}(2,\CC)$.

In the second non-generic case, we assume that $e-3\delta-4=0$,
see~\eqref{eq:inki}. Then  we have   $\Delta = (3 \delta +4) (7 \delta +9)$, and
$r(z)$ simplifies to
\begin{equation}
\begin{split}
r(z)&=-\frac{\delta }{3 (\delta +1) (3\delta +5) z}-\frac{3}{16 (z+3+3\delta)^2}-\frac{3}{16 (z-5-3
	\delta)^2}\\
&+\frac{171 \delta ^2+8 \delta (z+55)+285}{24 (\delta +1) (3
	\delta +5) (z+3+3\delta) (z-5-3\delta)}.
\end{split}
\end{equation}   
The expansion of $r(z)$ at infinity still has the form \eqref{eq:infinitek}. Now
equation \eqref{eq:red+} has four singularities:  $z_0=0$, $z_1=-3 (\delta +1)$
and $z_2=3 \delta +5$ and infinity. The singularity $z_0$ is a pole of the first
order, $z_1$ and $z_2$ are poles of the second order, and infinity is
regular.  The discriminant of the denominator of $r(z)$ is positive; thus there
is no coalescence of singularities. In addition, the resultant of the numerator and the
denominator of $r(z)$ does not vanish, so they do not have a common factor. Differences of exponents $\Delta_i$ at the respective singularities are
\begin{equation}
\Delta_0=1,\quad \Delta_1=\Delta_2=\frac{1}{2},\quad \Delta_{\infty}=3,
\end{equation} 
so,  the sets of exponents are
\begin{equation*}
E_0=\{1\},\quad E_1=E_2=\left\{ \frac{1}{4}, \frac{3}{4}\right\},\quad
E_{\infty}=\left\{ -1,2\right\}.
\end{equation*}
One can check that there is a local solution of \eqref{eq:red+} near
$z=0$ with a logarithmic term.

We have only two choices of $\vveps=(\varepsilon_0,
,\varepsilon_1,\varepsilon_2,\varepsilon_\infty)$ such that $d(\vveps)=
\varepsilon_\infty - (\varepsilon_0+ \varepsilon_1+\varepsilon_2)$ $ \in \N_0$,
namely 
\begin{equation*}
\begin{split}
& \vveps_1=  \left(1, \frac{3}{4}, \frac{1}{4},   2 \right)  ,\quad d\left(\vveps_1\right)=0,\\
& \vveps_2=  \left(1, \frac{1}{4}, \frac{3}{4},   2 \right)  ,\quad d\left(\vveps_2\right)=0.
\end{split}
\end{equation*}
It is easy to check that for both choices the respective
equation~\eqref{eq:pol_kov} has no non-zero polynomial solution of degree zero.
In effect,  equation \eqref{eq:red+} in the considered case has no
hyperexponential solution, and the differential Galois group over $\CC(z)$ is
$\mathrm{SL}(2,\CC)$.

In the last non-generic case, we assume that $-3 \delta +2 e-4=0$. Thus,  $e=
\frac{3 \delta }{2}+2$ and $\Delta = \frac{1}{2} (3 \delta +4) (7 \delta +6)$.
The rational function $r(z)$ takes the form
\begin{equation*}
\begin{split}
r(z)&=-\frac{4 \delta }{3 \left(3 \delta ^2+8 \delta +4\right) z}-\frac{3}{16 \left(z-\frac{3 (\delta
		+2)}{2}\right)^2}+\frac{5}{16 \left(z+\frac{1}{2} (3 \delta +2)\right)^2}\\
&+\frac{\delta  (135 \delta +32 z+296)+180}{24 (\delta +2) (3 \delta +2)
	\left(z-\frac{3 (\delta +2)}{2}\right) \left(z+\frac{1}{2} (3 \delta +2)\right)}. 
\end{split}
\end{equation*}
Hence, equation \eqref{eq:red+}  has four singularities: $z_0=0$,
$z_1=-\frac{1}{2} (3 \delta +2)$, $z_2=\frac{3 (\delta +2)}{2}$ and
$z_{\infty}=\infty$. As in the previous cases for  $\delta>0$ all their
singularities are pairwise different.  The expansion of $r(z)$ at infinity still has the form \eqref{eq:infinitek}. The differences of exponents  $\Delta_i$ at
the respective singularities are
\begin{equation}
\Delta_0=1,\quad \Delta_1=\frac{3}{2},\quad \Delta_2=\frac{1}{2},\quad \Delta_{\infty}=3,
\end{equation} 
so,  the sets of exponents are
\begin{equation*}
E_0=\{1\},\quad E_1=\left\{ -\frac{1}{4}, \frac{5}{4}\right\},\quad E_2=\left\{ \frac{1}{4}, \frac{3}{4}\right\},\quad
E_{\infty}=\left\{ -1,2\right\}.
\end{equation*}
We have only one choice of exponents
\begin{equation*}
\vveps=  \left(1, -\frac{1}{4}, \frac{1}{4},   2 \right)  ,\quad d\left(\vveps\right)=1.
\end{equation*}
The respective equation~\eqref{eq:pol_kov} has a polynomial solution $P(z)=z+s_1$ iff 
\begin{equation*}
\delta -s_1=0,\qquad  (\delta +4) s_1-3 (\delta +2) (3 \delta +2)=0.
\end{equation*}
The first equation gives $s_1=\delta$ and substitution to the second equation
gives the equality $2 \delta ^2+5 \delta +3=0$ which is not valid for
$\delta>0$. Thus, equation \eqref{eq:red+} in the considered case has no
hyperexponential solution and the differential Galois group over $\CC(z)$ is
$\mathrm{SL}(2,\CC)$. 

In this way, we proved the lemma.

\section{Proof of Lemma~\ref{lem:main0}}
\label{append:main0}
The reduced form of equation~\eqref{eq:sec0} is 
\begin{equation}
w''=r(z) w, \qquad 
r(z)= \frac{48 (2 \delta +3)^2}{l(z)^2}-  \frac{4
	(2 \delta +3) (6 \delta +5 z+3)}{(z-1)^2 l(z)}+\frac{4 \delta +15 z}{4 (z-1)^2 z},
\label{eq:rre0}
\end{equation}
where 
\begin{equation*}
l(z)= \delta  (9 \delta -3 \Delta +17)-4 \Delta +8 (2 \delta +3) z+4.
\end{equation*}
This equation has singularities $z_0=0$, $z_1=1$, root of equation $l(z)=0$,  that
is $z_2=\tfrac{3 \delta \Delta+4 \Delta-9 \delta ^2 -17 \delta -4}{8 (2 \delta
+3)}$, and $z_{\infty}=\infty$. All these singularities are generally different except for particular values of $\delta$, which will be considered later. In the
generic case $z_0$ is a pole of the first order, the remaining poles are  of
the second order. The coefficients $\alpha_i$ of the expansions $r(z)$  at
singularities defined in \eqref{eq:alphazi} are
\begin{equation*}
\alpha_0=0,\quad \alpha_1= \frac{8 (2 \delta +3)}{-3 \delta
	+\Delta -7}+\delta
+\frac{15}{4},\quad \alpha_2= \frac{3}{4},\quad\alpha_{\infty}=2.
\end{equation*}
Differences of exponents at singularities are
\begin{equation*}
\Delta_0=1,\quad \Delta_1=2 \sqrt{\frac{8 (2 \delta+3)}{-3 \delta +\Delta-7}+\delta +4},\quad \Delta_2=2,\quad \Delta_{\infty}=3.
\end{equation*}
One can check that in local solutions of variational equations around $z_0$
exists a logarithmic term provided $\delta\neq0$. 

Now we will check if the reduced equation \eqref{eq:rre0}  has a hyperexponential solution. Let us notice using the explicit form of $\Delta$
given in \eqref{eq:Delta22}, allows to rewrite   $\Delta_1$  as
\begin{equation}
\Delta_1=\sqrt{2} \sqrt{1-\delta
	-\sqrt{(1+\delta) (1+9 \delta)}}.
\end{equation}
For $\delta>0$ the expression under the square root is always negative,
therefore exponent $\varepsilon_1=\tfrac{1}{2}(1\pm \Delta_1)$ at $z_1$ is a
complex number.  Thus, we cannot choose exponents
$\vveps=(\varepsilon_0,\varepsilon_1,\varepsilon_2,\varepsilon_{\infty})$ such
that $\varepsilon_{\infty}-(\varepsilon_0+\varepsilon_1+\varepsilon_2)\in\N_0$.
We conclude that the equation does not have a solution, and, therefore,  its differential Galois group over $\CC(z)$ is $\operatorname{SL}(2,\CC)$.

Let us now consider non-generic cases. For this purpose, we write the rational
coefficient $r(z)$ given in \eqref{eq:rre0} in the form
$r(z)=\tfrac{P(z)}{\widetilde{Q}(z)}$, where the denominator takes the form
$\widetilde{Q}(z)=4(z-1)l(z)Q(z)$ and $Q(z)=z(z-1)l(z)$ contains factors of the
denominator of $r(z)$ linear in $z$. Looking for the coalescence of
singularities, we consider the discriminant of $Q(z)$
\begin{equation*}
\operatorname{disc}(Q(z))=(3 \delta +4)^2 (\delta  (9
\delta -3 \Delta +17)-4
\Delta +4)^2 (-3 \delta
+\Delta -7)^2.
\end{equation*}
As $\Delta^2= 9\delta^2 +10\delta+1$, see the equation \eqref{eq:Delta22}, then the equality  $\delta  (9 \delta -3
\Delta +17)-4  \Delta +4=0$ implies  $-\tfrac{16 \delta  (2 \delta +3)}{(3 \delta
	+4)^2}=0$ , but it is impossible because $\delta>0$. Finally, the equality $(-3 \delta +\Delta -7)=0$ implies $16 (2 \delta
+3)=0$, which is again impossible. To analyze the cases where
$P(z)$ and $Q(z)$ have a common factor that can cancel, we consider the
resultant
\begin{equation*}
\operatorname{result} (P (z), Q (z)) = 
12222 \delta (2 \delta +3) 2 (3 \delta +4) 4S_1 (z) S_2 (z) S_3 (z),
\end{equation*}
where 
\begin{equation*} 
\begin{split}
&S_1(z)= \delta  [\delta  (9 \delta  (9
\delta -3 \Delta +43)-114
\Delta +647)-155 \Delta
+421]-64 (\Delta -1),\\
&S_2(z)=\delta  [\delta  (9 \delta  (9
\delta -3 \Delta +34)-87
\Delta +377)-80 (\Delta
-2)]-16 (\Delta -1),\\
&S_3(z)= \delta  [\delta  (36 \delta -12
\Delta +143)-41 \Delta
+122]-57 \Delta +39.
\end{split}
\end{equation*}
Vanishing of $S_1(z)$ or $S_2(z)$ or $S_3(z)$ lead to conditions
\begin{equation*}
\begin{split}
& -\frac{256 \delta  (2 \delta
	+3)^3}{\left(27 \delta ^3+114
	\delta ^2+155 \delta
	+64\right)^2}=0,\quad \frac{64 \delta ^2 (2 \delta
	+3)^2}{(3 \delta +4)^2
	\left(9 \delta ^2+17 \delta
	+4\right)^2}=0,\\
&-\frac{64 (2 \delta +3)^2
	\left(32 \delta ^2+44 \delta
	+3\right)}{\left(12 \delta
	^2+41 \delta +57\right)^2}=0,
\end{split} 
\end{equation*}
respectively, but these expressions cannot vanish. Thus, there are no non-generic cases to consider.

\section{Proof of Lemma~\ref{lem:quad}}
\label{app:D}

Assume that the quadratic form vanishes in a vector space $\mathcal{W}$. Then it also
vanishes in its complex conjugate $\bar{\mathcal{W}}$. Consider now the
intersection vector space $\mathcal{W}'=\mathcal{W} \cap \bar{\mathcal{W}}$.
Considering  $\mathbb{C}^m$ as $\mathbb{R}^{2m}$, the complex conjugation operator
writes
\begin{equation*}
\sigma = \left[\begin{array}{cc}  \vI_m & 0\\ 0 & -\vI_m \end{array} \right].
\end{equation*}
The vector space $\mathcal{W}'$ is stable by the complex conjugation by
definition and thus is a stable vector space of this matrix. Any stable vector
space of $\sigma$ is a direct sum of vector sub-spaces of each block,
$\mathcal{E}_{+} \oplus \mathcal{E}_{-}$. Thus $\mathcal{W}'$ follows this
decomposition, and we can write a basis $\mathcal{B}$ of it whose elements are
either invariant by $\sigma$, or invariant by $-\sigma$. Considering again 
$\mathcal{W}'$ as a complex vector space, this means that $\mathcal{B}$ has elements that are either real or pure imaginary. Multiplying these latter  by $\rmi$ we get  a real
basis of $\mathcal{W}'$. But then our positive definite quadratic form vanishes
in this real vector space, and thus $\mathcal{W}'=0$. Finally,
\[
\dim(\mathcal{W} + \bar{\mathcal{W}})=\dim\mathcal{W}+\dim\bar{\mathcal{W}} \leq m 
\]
and thus $2\dim \mathcal{W} \leq m$, giving the Lemma.

%------
% Insert acknowledgments and information
% regarding funding at the end of the last
% section, i.e., right before the bibliography.
%------

\section*{Acknowledgments}
The authors wish to express their thanks to Alain Albouy and Christian
Marchal for sending certain hard-to-reach works.  

We are deeply grateful to the reviewers for their valuable comments, which have contributed to the improvement of our article.

For AJM and MP this research has been partially supported by
The National Science Center of Poland under Grant No. 2020/39/D/ST1/01632 and by the Minister of Science under the “Regional Excellence Initiative” program, Project No. RID/SP/0050/2024/1, and the work of  TC  was  founded by  a CNRS IEA grant ``Integrability of Hamiltonian Systems''.

%------
% Insert the bibliography.
%------
%\bibliographystyle{plainnat}
%\bibliographystyle{emsjems}
%\bibliography{3body,meyer,books,mp_new,mathreva,dgt,moeckel,r3bp,combot,ziglin}
\def\cprime{$'$} \def\cydot{\leavevmode\raise.4ex\hbox{.}} \def\cprime{$'$} \newcommand{\noopsort}[1]{}

\end{document}